\definecolor{TODO}{rgb}{0.6,0.6,0.6} % TO DO!!!!
\definecolor{TOCHECK}{rgb}{0.8,0.8,0.8} % TO CKECK!!!!
\newtheorem{theorem}{Theorem}
\newcommand{\btheo}{\begin{theorem}}
\newcommand{\etheo}{\end{theorem}}
\newcommand{\bproof}{\begin{proof}}
\newcommand{\eproof}{\end{proof}}
\newtheorem{definition}[theorem]{Definition}
\newcommand{\bdefi}{\begin{definition}}
\newcommand{\edefi}{\end{definition}}
\newtheorem{fact}[theorem]{Fact}
\newcommand{\bprop}{\begin{fact}}
\newcommand{\eprop}{\end{fact}}
\newtheorem{corollary}[theorem]{Corollary}
\newcommand{\bcor}{\begin{corollary}}
\newcommand{\ecor}{\end{corollary}}
\newtheorem{example}[theorem]{Example}
\newcommand{\bex}{\begin{example}}
\newcommand{\eex}{\end{example}}
\newtheorem{lemma}[theorem]{Lemma}
\newcommand{\blemma}{\begin{lemma}}
\newcommand{\elemma}{\end{lemma}}
\newtheorem{remark}[theorem]{Remark}
\newcommand{\bremark}{\begin{remark}}
\newcommand{\eremark}{\end{remark}}
\newtheorem{conj}[theorem]{Conjecture}
\newcommand{\bconj}{\begin{conj}}
\newcommand{\econj}{\end{conj}}
\newtheorem{observation}[theorem]{Observation}
\newcommand{\naturals}{\ensuremath{\mathbb{N}}}
\newcommand{\argmin}{\ensuremath{\text{argmin}}}
\def\0{{\tt 0}} % Ex.: BPSK modulation => 0 is encoded into +1
\def\1{{\tt 1}} % Ex.: BPSK modulation => 1 is encoded into -1
\def\?{{\tt *}} % erasure symbol
\newcommand{\qed}{{\hfill \footnotesize $\blacksquare$}}
\renewcommand{\mid}{\,|\,}
\newcommand{\BSC}{\ensuremath{\text{BSC}}}
\newcommand{\code}[1]{\mathtt{C}^{#1}}
\newcommand{\exponent}{\mathtt{E}}
\newcommand{\dmin}{\mathtt{dmin}}
\renewcommand{\dh}{{{d_{H}}}}
\begin{document}
\title{Polar Codes: Characterization of Exponent, Bounds, and Constructions}
\author{Satish Babu Korada, Eren \c Sa\c so\u glu and R{\"u}diger Urbanke}

\maketitle
\begin{abstract}
Polar codes were recently introduced by Ar\i kan. They
achieve the capacity of arbitrary symmetric binary-input discrete memoryless channels
under a low complexity
successive cancellation decoding strategy. 
The original polar code construction is closely related to the
recursive construction of Reed-Muller codes and is based on the $2 \times 2$  matrix 
$\bigl[ \begin{smallmatrix} 1 &0 \\ 1& 1 \end{smallmatrix} \bigr]$.
It was shown by Ar\i kan and Telatar that this construction achieves an error exponent of $\frac12$, i.e.,
that for sufficiently large blocklengths the error probability decays 
exponentially in the square root of the length.
It was already mentioned by Ar\i kan that in principle larger matrices can be used to construct polar codes. 
A fundamental question
then is to see whether there exist matrices with exponent exceeding $\frac12$.
We first show that any $\ell \times \ell$ matrix none of whose 
column permutations is upper triangular polarizes symmetric channels.
We then characterize the exponent of a given square matrix and derive upper and
lower bounds on achievable exponents. Using these bounds we show that
there are no matrices of size less than $15$ with exponents exceeding $\frac12$. Further, we give
a general construction based on BCH codes which for large $n$ achieves exponents arbitrarily close to $1$
and which exceeds $\frac12$ for size $16$.
\end{abstract}
\section{Introduction}
Polar codes, introduced by Ar\i kan in \cite{Ari08}, are the first provably capacity achieving codes for arbitrary symmetric binary-input discrete memoryless channels (B-DMC) with low encoding and decoding complexity. The polar code construction is based on the following observation: Let 
\begin{align}\label{eqn:2by2}
G_2=\left[
\begin{array}{cc}
1 & 0 \\
1 & 1
\end{array}
\right].
\end{align}
Apply the transform $G_2^{\otimes n}$ 
(where ``$\phantom{}^{\otimes n}$'' denotes the $n^{th}$ Kronecker power) to a block of 
$N = 2^n$ bits and transmit the output through independent copies of a B-DMC $W$ (see Figure \ref{fig:transform}). 
 As $n$ grows large, the channels seen by individual bits (suitably defined in \cite{Ari08}) start \emph{polarizing}: they approach either a noiseless channel or a pure-noise channel, where the fraction of channels becoming noiseless is close to the symmetric mutual information $I(W)$.

It was conjectured in \cite{Ari08} that polarization is a general phenomenon,
and is not restricted to the particular transformation $G_2^{\otimes n}$. In
this paper we first give a partial affirmation to this conjecture. In particular, we
consider transformations of the form $G^{\otimes n}$ where $G$ is an
$\ell\times\ell$ matrix for $\ell \geq 3$ and provide necessary and sufficient
conditions for such $G$s to polarize symmetric B-DMCs. 

\begin{figure}[ht]
\centering
\input{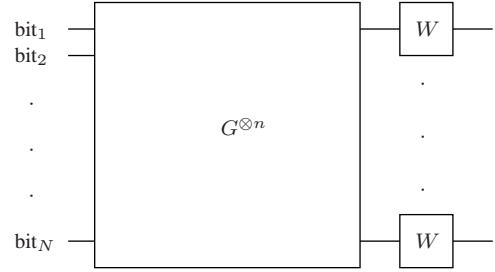}
\caption{The transform $G^{\otimes n}$ is applied and the resulting vector
is transmitted through the channel $W$.}\label{fig:transform}
\end{figure}

For the matrix $G_2$ it was shown by Ar\i kan and Telatar \cite{ArT08} that the block error probability for polar
coding and successive cancellation decoding is $O(2^{-2^{n\beta}})$ for any
fixed $\beta < \frac12$, where $2^n$ is the blocklength.
In this case we say that $G_2$ has {\em exponent} $\frac12$. We show that this
exponent can be improved by considering larger matrices. In fact, the exponent
can be made arbitrarily close to $1$ by increasing the size of the
matrix $G$.

Finally, we give an explicit construction of a family of matrices, derived from
BCH codes, with exponent approaching $1$ for large $\ell$. This
construction results in a matrix whose exponent
exceeds $\frac12$ for $\ell  = 16$. 
%Further, we show that this is the smallest
%$\ell$ at which an exponent larger than $\frac12$ can be achieved.

\section{Preliminaries}
In this paper we deal exclusively with \emph{symmetric} channels:
\begin{definition}
A binary-input discrete memoryless channel (B-DMC) $W: \{0,1\}\to \mathcal{Y}$ is said to be symmetric if there exists a permutation $\pi: \mathcal{Y}\to\mathcal{Y}$ such that $\pi = \pi^{-1}$ and $W(y|0) = W(\pi(y)|1)$ for all $y\in\mathcal{Y}$.
\end{definition}

Let $W: \{0,1\} \to \mathcal{Y}$ be a symmetric binary-input discrete memoryless channel (B-DMC). 
Let $I(W) \in [0,1]$ denote the mutual information between the input and output of $W$ 
with uniform distribution on the inputs.  Also, let $Z(W) \in [0,1]$ denote the Bhattacharyya 
parameter of $W$, i.e., $Z(W) = \sum_{y\in\mathcal{Y}} \sqrt{W(y|0)W(y|1)}$. 

Fix an $\ell\geq 3$ and an $\ell\times\ell$ invertible matrix $G$ with entries in $\{0,1\}$. Consider a random $\ell$-vector $U_1^\ell$ that is uniformly distributed over $\{0,1\}^\ell$. Let $X_1^\ell = U_1^\ell G$, where the multiplication is performed over GF(2). Also, let $Y_1^\ell$ be the output of $\ell$ uses of $W$ with the input $X_1^\ell$. The channel between $U_1^\ell$ and $Y_1^\ell$ is defined by the transition probabilities 
\begin{align}\label{equ:channelwl}
W_\ell (y_1^\ell\mid u_1^\ell) \triangleq \prod_{i=1}^\ell W(y_i\mid x_i) = \prod_{i=1}^\ell W(y_i\mid (u_1^\ell G)_i).
\end{align}
Define $W^{(i)}: \{0,1\} \to \mathcal{Y}^\ell \times \{0,1\}^{i-1}$ as the channel with input $u_i$, output $(y_1^\ell,u_1^{i-1})$ and transition probabilities
\begin{align} \label{equ:transitionprobabilities}
W^{(i)}(y_1^\ell,u_1^{i-1}\mid u_i) = \frac{1}{2^{\ell-1}} \sum_{u_{i+1}^\ell} W_\ell(y_1^\ell\mid u_1^\ell),
\end{align}
and let $Z^{(i)}$ denote its Bhattacharyya parameter, i.e., 
\begin{align*}
Z^{(i)} = \sum_{y_1^\ell,u_1^{i-1}} \sqrt{W^{(i)}(y_1^\ell,u_1^{i-1}\mid 0)W^{(i)}(y_1^\ell,u_1^{i-1}\mid 1)}.
\end{align*}

For $k\geq 1$ let $W^k:\{0,1\}\to\mathcal{Y}^k$ denote the B-DMC with transition probabilities 
\begin{align*}
W^k(y_1^k\mid x) = \prod_{j=1}^k W(y_j\mid x).
\end{align*}

Also let $\tilde{W}^{(i)}: \{0,1\} \to \mathcal{Y}^\ell$ denote the B-DMC with transition probabilities
\begin{align}\label{equ:channelwtilde} 
\tilde{W}^{(i)}(y_1^\ell\mid u_i) = \frac{1}{2^{\ell-i}} \sum_{u_{i+1}^\ell} W_\ell(y_1^\ell\mid 0_1^{i-1},u_i^\ell).
\end{align} 
\begin{observation}\label{obs:equivalent}
Since $W$ is symmetric, the channels $W^{(i)}$ and $\tilde{W}^{(i)}$ are equivalent in the sense that for any fixed $u_1^{i-1}$ there exists a permutation $\pi_{u_1^{i-1}}: \mathcal{Y}^\ell \to \mathcal{Y}^\ell$ such that 
\[
W^{(i)}(y_1^\ell, u_1^{i-1} \mid u_i) = \frac{1}{2^{i-1}}\tilde{W}^{(i)}(\pi_{u_1^{i-1}}(y_1^\ell)\mid u_i).
\]
\end{observation}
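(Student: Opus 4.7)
The plan is to exploit the linearity of the transform together with the channel symmetry in order to absorb the effect of a nonzero $u_1^{i-1}$ into a coordinate-wise permutation of $y_1^\ell$. Concretely, since $X_1^\ell = U_1^\ell G$ is linear over $\mathrm{GF}(2)$, we can split the codeword as
\[
X_1^\ell = (u_1^{i-1},\, 0_i^\ell)\, G \;\oplus\; (0_1^{i-1},\, u_i^\ell)\, G \;\triangleq\; v_1^\ell \oplus x'^{\,\ell}_1,
\]
where $v_1^\ell$ depends only on the fixed vector $u_1^{i-1}$ and $x'^{\,\ell}_1$ depends only on $u_i^\ell$. Plugging this into $W_\ell$ gives
\[
W_\ell(y_1^\ell \mid u_1^{i-1},u_i^\ell) = \prod_{j=1}^\ell W(y_j \mid x'_j \oplus v_j).
\]

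Next, I would use the symmetry of $W$. By Definition of symmetric B-DMC there is an involution $\pi:\mathcal{Y}\to\mathcal{Y}$ with $W(y\mid 0)=W(\pi(y)\mid 1)$, which (using $\pi=\pi^{-1}$ and binary inputs) implies
\[
W(y \mid x \oplus b) = W(\pi^{b}(y) \mid x), \qquad x,b\in\{0,1\},\; y\in\mathcal{Y}.
\]
Applying this coordinate-by-coordinate to the product above with $b=v_j$, I would define $\pi_{u_1^{i-1}}:\mathcal{Y}^\ell \to \mathcal{Y}^\ell$ by
\[
\pi_{u_1^{i-1}}(y_1^\ell) \;=\; \bigl(\pi^{v_1}(y_1),\,\ldots,\,\pi^{v_\ell}(y_\ell)\bigr),
\]
which is indeed a permutation since each component $\pi^{v_j}$ is. With this definition,
\[
W_\ell(y_1^\ell \mid u_1^{i-1},u_i^\ell) = W_\ell\bigl(\pi_{u_1^{i-1}}(y_1^\ell) \,\bigm|\, 0_1^{i-1},u_i^\ell\bigr).
\]

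Finally, I would sum both sides over $u_{i+1}^\ell$ and use the definitions of $W^{(i)}$ in \eqref{equ:transitionprobabilities} and $\tilde W^{(i)}$ in \eqref{equ:channelwtilde}. The right-hand sum produces $2^{\ell-i}\tilde W^{(i)}(\pi_{u_1^{i-1}}(y_1^\ell)\mid u_i)$, and multiplying by the prefactor $1/2^{\ell-1}$ coming from \eqref{equ:transitionprobabilities} gives $1/2^{i-1}$, exactly the claimed identity. There is no real obstacle here: the whole content of the observation is the combinatorial identification of the permutation $\pi_{u_1^{i-1}}$, after which the equality is essentially a bookkeeping step on the normalization constants.
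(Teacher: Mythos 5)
Your proof is correct and is exactly the argument the paper leaves implicit in stating this as an Observation: split $U_1^\ell G$ into the part generated by the fixed prefix $u_1^{i-1}$ and the part generated by $u_i^\ell$, use the symmetry involution coordinate-wise to shift the prefix contribution into a permutation $\pi_{u_1^{i-1}}$ of the outputs, and then sum over $u_{i+1}^\ell$, with the normalization $2^{\ell-i}/2^{\ell-1}=2^{-(i-1)}$ giving the stated factor. No gaps; the key point that $\pi_{u_1^{i-1}}$ depends only on $u_1^{i-1}$ (through $v_1^\ell=(u_1^{i-1},0_i^\ell)G$) is handled correctly.
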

Finally, let $I^{(i)}$ denote the mutual information between the input and output of channel $W^{(i)}$. Since $G$ is invertible, it is easy to check that 
\begin{align*}
\sum_{i=1}^\ell I^{(i)} = \ell I(W). 
\end{align*}

We will use $\code{}$ to denote a linear code and $\dmin(\code{})$ to denote its minimum distance.
We let $\langle g_1,\dots,g_k\rangle$ denote the linear code generated by the vectors $g_1,\dots,g_k$. We let $\dh(a,b)$ denote the Hamming distance between binary vectors $a$ and $b$. We
also let $\dh(a,\code{})$ denote the minimum distance between a vector $a$ and a
code $\code{}$, i.e., $\dh(a,\code{}) = \min_{c \in \code{}}\dh(a,c)$.

\section{Polarization}
We say that $G$ is a \emph{polarizing} matrix if there exists an $i\in\{1,\dotsc,\ell\}$ for which 
%for which $\tilde{W}^{(i)}$ is equivalent to $W^k$ for 
%some $k\geq 2$, in the sense that
\begin{align}\label{eqn:polarizing}
\tilde{W}^{(i)}(y_1^\ell\mid u_i) = Q(y_{A^c}) \prod_{j\in A}W(y_j \mid u_i) 
\end{align}
for some and $A\subseteq \{1,\dots,\ell\}$ with $\vert A \vert = k$, $k\geq 2$, and a probability distribution $Q:\mathcal{Y}^{|A^c|}\to [0,1]$. 

In words, a matrix $G$ is polarizing if there exists a bit which ``sees''
a channel whose $k$ outputs are equivalent to those of $k$ independent realizations of 
the underlying channel, whereas the remaining $\ell-k$ outputs are independent of the input to the channel. The reason to call such a $G$ ``polarizing'' 
is that, as we will see shortly, 
a repeated application of such a transformation polarizes the underlying channel.

Recall that by assumption $W$ is symmetric. Hence, by Observation 
\ref{obs:equivalent}, equation (\ref{eqn:polarizing}) implies  
\begin{align}
W^{(i)}(y_1^\ell,u_1^{i-1} \mid u_i) = \frac{Q(y_{A^c})}{2^{i-1}} \prod_{j\in A}W((\pi_{u_1^{i-1}}(y_1^\ell))_j\mid u_i),
\end{align}
an equivalence we will denote by $W^{(i)} \equiv W^k$. Note that $W^{(i)} \equiv W^k$ 
implies $I^{(i)} = I(W^k)$ and $Z^{(i)} = Z(W^k)$.

We start by claiming that any invertible $\{0,1\}$ matrix $G$ can be written as a (real)
sum $G=P+P'$, where $P$ is a permutation matrix, and $P'$ is a $\{0,1\}$
matrix. 
To see this, consider a bipartite graph on $2 \ell$ nodes. The $\ell$ left nodes
correspond to the rows of the matrix and the $\ell$ right nodes correspond to the
columns of the matrix. Connect left node $i$ to right node $j$ if $G_{ij}=1$. The
invertibility of $G$ implies that for every subset of rows ${\mathcal R}$ 
the number of columns which contain non-zero elements in these rows is at least
$|{\mathcal R}|$. 
By Hall's Theorem \cite[Theorem 16.4.]{BoM08}
this guarantees that there is a matching between the left and the right nodes of the graph and
this matching represents a permutation.
Therefore, for any invertible matrix $G$, there exists a
column permutation so that all diagonal elements of the permuted matrix are $1$.
Note that the transition probabilities defining $W^{(i)}$ are invariant (up to a
permutation of the outputs $y_1^\ell$) under column permutations on $G$. Therefore, for
the remainder of this section, and without loss of generality, we assume
that $G$ has $1$s on its diagonal.

The following lemma gives necessary and sufficient conditions for 
(\ref{eqn:polarizing}) to be satisfied.

\begin{lemma}
[Channel Transformation for Polarizing Matrices]
\label{lem:square}
Let $W$ be a symmetric B-DMC.
\begin{itemize} 
\item[(i)] If $G$ is not upper triangular, then there exists an $i$ for which $W^{(i)} \equiv W^k$ for some $k\geq 2$.
\item[(ii)] If $G$ is upper triangular, then $W^{(i)} \equiv W$ for all $1\leq i \leq \ell$.
\end{itemize} 
\end{lemma}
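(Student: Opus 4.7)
The plan is to handle the two parts separately. For (ii), fix any $i$ and expand $\tilde W^{(i)}(y_1^\ell\mid u_i)$ directly using the upper-triangular structure with $u_1=\dots=u_{i-1}=0$: the identity $x_k = \sum_j u_j G_{jk}$ collapses to $x_k = 0$ for $k<i$, to $x_i = u_i$, and for $k>i$ to $x_k = u_i G_{ik} + \sum_{i<j\le k} u_j G_{jk}$. The map $(u_{i+1},\dots,u_\ell)\mapsto(x_{i+1},\dots,x_\ell)$ is affine and its linear part is the lower-right $(\ell-i)\times(\ell-i)$ submatrix of $G$, which inherits upper triangularity and $1$s on the diagonal, hence is invertible. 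Summing uniformly over $u_{i+1}^\ell$ therefore produces $\prod_{k>i} \tfrac12(W(y_k\mid 0)+W(y_k\mid 1))$ independent of $u_i$, yielding $\tilde W^{(i)}(y_1^\ell\mid u_i) = \prod_{k<i}W(y_k\mid 0)\cdot W(y_i\mid u_i)\cdot\prod_{k>i}\tfrac12(W(y_k\mid 0)+W(y_k\mid 1))$, the polarizing form with $A=\{i\}$, so $W^{(i)}\equiv W$.

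For (i), I would induct on $\ell$. The base case $\ell=2$ is immediate: the only invertible $2\times 2$ matrix with $1$s on the diagonal that is not upper triangular is $G_2$, and $\tilde W^{(2)}(y_1^2\mid u_2) = W(y_1\mid u_2)W(y_2\mid u_2)$ gives $|A|=2$. For the inductive step, split on the weight of row $\ell$ of $G$. If this weight is at least two, take $i=\ell$: with $u_1,\dots,u_{\ell-1}$ zero one has $x_k = u_\ell G_{\ell k}$, so $\tilde W^{(\ell)}(y_1^\ell\mid u_\ell) = \prod_{k:\,G_{\ell k}=1} W(y_k\mid u_\ell)\cdot\prod_{k:\,G_{\ell k}=0} W(y_k\mid 0)$, which is the polarizing form with $A$ equal to the support of row $\ell$ and $|A|\ge 2$.

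If instead row $\ell$ has weight exactly one, then $G_{\ell\ell}=1$ and $G_{\ell k}=0$ for $k<\ell$, so $G$ has the block form $\bigl[\begin{smallmatrix} G' & c\\ 0 & 1\end{smallmatrix}\bigr]$ for some $(\ell-1)\times(\ell-1)$ matrix $G'$ and column $c$. The block $G'$ has $1$s on the diagonal, is invertible (since $\det G = \det G'$), and is not upper triangular: any $1$ strictly below the diagonal of $G$ cannot lie in its last row and must therefore lie inside $G'$. The inductive hypothesis applied to $G'$ supplies some $i^*\le \ell-1$ such that the channel $\tilde W^{(i^*)}$ built from $G'$ has the polarizing form with some $|A|\ge 2$. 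To lift this to $G$, note that because $G_{\ell k}=0$ for $k<\ell$ the variable $u_\ell$ appears only in $x_\ell = u_\ell + (\text{linear expression in the remaining variables})$, and summing over $u_\ell$ gives $\sum_{u_\ell} W(y_\ell\mid x_\ell) = W(y_\ell\mid 0)+W(y_\ell\mid 1)$ independently of all other $u_j$. Hence the $\tilde W^{(i^*)}$ built from $G$ equals the one from $G'$ multiplied by $\tfrac12(W(y_\ell\mid 0)+W(y_\ell\mid 1))$, which preserves the polarizing form with the same $A$, the extra $y_\ell$-factor being absorbed into $Q$. The hardest step is this weight-one subcase: identifying the block structure, arguing the descent of non-upper-triangularity from $G$ to $G'$, and carrying out the clean peeling of the $y_\ell$-factor that makes the induction go through.
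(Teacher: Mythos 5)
Your proof is correct and follows essentially the same route as the paper: examine the weight of the last row of $G$, conclude $W^{(\ell)}\equiv W^k$ with $k\ge 2$ when that weight is at least $2$, and when it equals $1$ factor out the $y_\ell$-marginal $\tfrac12\bigl(W(y_\ell\mid 0)+W(y_\ell\mid 1)\bigr)$ and recurse on the top-left $(\ell-1)\times(\ell-1)$ block. Your explicit induction and the direct computation for the upper-triangular case are just a formalized, unrolled version of the paper's repeated peeling argument, so the two proofs coincide in substance.
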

\begin{proof}
Let the number of 1s in the last row of $G$ be $k$. Clearly $W^{(\ell)}
\equiv W^k$.  If $k\geq2$ then $G$ is not upper triangular and the first
claim of the lemma holds. If $k=1$ then
\begin{align}\label{equ:fact}
G_{lk} = 0, \;\;\text{for all $1 \leq k<\ell$}. 
\end{align}
One can then write
\begin{align*}
&W^{(\ell-i)}(y_1^\ell, u_1^{\ell-i-1}\mid u_{\ell-i}) \\
&= \frac{1}{2^{\ell-1}} \sum_{u_{\ell-i+1}^\ell} W_\ell(y_1^\ell \mid u_1^\ell) \\
& = \frac{1}{2^{\ell-1}} \sum_{u_{\ell-i+1}^{\ell-1},u_\ell} \Pr[Y_1^{\ell-1}= y_1^{\ell-1}\mid U_1^{\ell} = u_1^{\ell}] \\
&\phantom{xxxxxxxxxxxxx}\cdot\Pr[Y_\ell = y_\ell\mid Y_1^{\ell-1} = y_1^{\ell-1},U_1^{\ell} = u_1^{\ell}]\\
& \stackrel{(\ref{equ:fact})}{=} \frac{1}{2^{\ell-1}} \sum_{u_{\ell-i+1}^{\ell-1},u_\ell}
W_{\ell-1}(y_1^{\ell-1}\mid u_1^{\ell-1}) \\ 
&\phantom{xxxxxxxxxxxxx}\cdot\Pr[Y_\ell = y_\ell\mid Y_1^{\ell-1} = y_1^{\ell-1},U_1^{\ell} = u_1^{\ell}]\\
& = \frac{1}{2^{\ell-1}} \sum_{u_{\ell-i+1}^{\ell-1}}W_{\ell-1}(y_1^{\ell-1}\mid
u_1^{\ell-1}) \\
&\phantom{xxxxxxxxxxxxx} \cdot\sum_{u_\ell}\Pr[Y_\ell = y_\ell\mid Y_1^{\ell-1} = y_1^{\ell-1},U_1^{\ell} = u_1^{\ell}]\\
& = \frac{1}{2^{\ell-1}} \big[W(y_\ell\mid 0) + W(y_\ell\mid 1)\big]
\sum_{u_{\ell-i+1}^{\ell-1}} W_{\ell-1}(y_1^{\ell-1}\mid u_1^{\ell-1}).
\end{align*}
Therefore, $Y_\ell$ is independent of the inputs to the
channels $W^{(\ell-i)}$ for $i=1,\dotsc,\ell-1$. 
This is equivalent to
saying that channels $W^{(1)},\dotsc,W^{(\ell-1)}$ are defined by the
matrix $G^{(\ell-1)}$, where we define $G^{(\ell-i)}$ as the $(\ell-i)\times (\ell-i)$ matrix obtained 
from $G$ by removing its last $i$ rows and columns. 
Applying the same argument to $G^{(\ell-1)}$
and repeating, we see that if $G$ is upper triangular, then we have
$W^{(i)} \equiv W$ for all $i$. On the other hand, if $G$ is not upper
triangular, then there exists an $i$ for which $G^{(\ell-i)}$
has at least two 1s in the last row. This in turn implies that $W^{(i)} \equiv W^k$ 
for some $k\geq 2$. 
\end{proof}

Consider the recursive channel combining operation given in \cite{Ari08}, using
a transformation $G$. Recall that $n$ recursions of this construction is
equivalent to applying the transformation $A_nG^{\otimes n}$ to $U_1^{\ell^n}$
where, $A_n:\{1,\dotsc,\ell^n\}\to\{1,\dotsc,\ell^n\}$ is a permutation defined
analogously to the bit-reversal operation in \cite{Ari08}.
\begin{theorem}
[Polarization of Symmetric B-DMCs]
 \label{thm:main}
Given a symmetric B-DMC $W$ and an $\ell \times \ell$ transformation $G$,
consider the channels $W^{(i)}, i=\{1,\dotsc,\ell^n\}$, defined by the transformation $A_nG^{\otimes n}$. 
\begin{itemize}
\item[(i)] If $G$ is polarizing, then for any $\delta > 0$
\begin{align}\label{eqn:fractionI}
\lim_{n \to \infty}\frac{\lvert\left\{i\in\{1,\dotsc,\ell^n\}: I(W^{(i)}) \in
(\delta, 1-\delta) \right\} \rvert}{\ell^n} = 0, \\
\lim_{n \to \infty}\frac{\lvert\left\{i\in\{1,\dotsc,\ell^n\}: Z(W^{(i)}) \in
(\delta, 1-\delta) \right\} \rvert}{\ell^n} = 0. \label{eqn:fractionZ}
\end{align}
\item[(ii)] If $G$ is not polarizing, then for all $n$ and $i\in\{1,\dotsc,\ell^n\}$
\begin{align*}
I(W^{(i)}) = I(W), \;\; Z(W^{(i)}) = Z(W).
\end{align*}
\end{itemize}
\end{theorem}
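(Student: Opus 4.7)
The argument is short. If $G$ is not polarizing, Lemma \ref{lem:square} (and the preceding reduction allowing us to permute columns so that the diagonal is all ones) forces $G$ to be upper triangular. Since Kronecker powers of upper triangular matrices are again upper triangular, $G^{\otimes n}$ is upper triangular for every $n$, and since $A_n$ is merely a coordinate permutation it does not change the list of synthetic channels. Applying Lemma \ref{lem:square}(ii) at scale $\ell^n$ then gives $W^{(i)} \equiv W$ for every $i \in \{1,\ldots,\ell^n\}$, hence $I(W^{(i)}) = I(W)$ and $Z(W^{(i)}) = Z(W)$.

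\textbf{Part (i).} I would follow Ar\i kan's martingale strategy, adapted to general $\ell$. On an auxiliary probability space pick $B_1,B_2,\ldots$ i.i.d.\ uniform on $\{1,\ldots,\ell\}$ and define the tree process $W_0 = W$ and $W_{n+1} = W_n^{(B_{n+1})}$. The standard recursive unrolling of $G^{\otimes n}$ shows that the law of $W_n$ is uniform over the $\ell^n$ synthetic channels defined by $A_n G^{\otimes n}$, so the fractions in (\ref{eqn:fractionI}) and (\ref{eqn:fractionZ}) equal $\Pr[I(W_n) \in (\delta,1-\delta)]$ and $\Pr[Z(W_n) \in (\delta,1-\delta)]$. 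Setting $I_n = I(W_n)$, the conservation identity $\sum_i I^{(i)} = \ell I(W)$ noted in the excerpt gives $\mathbb{E}[I_{n+1}\mid\mathcal{F}_n] = I_n$, so $(I_n)$ is a bounded $[0,1]$-valued martingale and, by Doob, converges a.s.\ and in $L^1$ to a limit $I_\infty$.

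To force $I_\infty \in \{0,1\}$ a.s.\ I would exploit the polarizing hypothesis: by definition there is a single index $i^*$ and an integer $k \geq 2$ such that at every channel $W_n$ visited by the process one has $W_n^{(i^*)} \equiv W_n^k$ (the property is intrinsic to $G$ and symmetric BMS inputs, which symmetry is preserved along the recursion). Hence on the event $B_{n+1} = i^*$, occurring with probability $1/\ell$, we have $I_{n+1} = I(W_n^k) \geq I_n$, with equality iff $I_n \in \{0,1\}$. The $L^1$ convergence then yields $\mathbb{E}[I(W_n^k) - I(W_n)] \to 0$. A \emph{uniform gap} inequality of the form ``for every $\delta > 0$ there exists $c(\delta) > 0$ with $I(W^k) - I(W) \geq c(\delta)$ whenever $W$ is symmetric BMS with $I(W) \in [\delta,1-\delta]$'' rules out $\Pr[I_\infty \in (\delta,1-\delta)] > 0$. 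Finally, Ar\i kan's inequalities $1 - Z(W) \leq I(W) \leq \sqrt{1-Z(W)^2}$ imply that (\ref{eqn:fractionI}) is equivalent to (\ref{eqn:fractionZ}).

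The step I expect to be the real work is the uniform gap inequality. A purely algebraic attempt via Bhattacharyya parameters—combining $I(W) \leq \sqrt{1-Z^2}$ with $I(W^k) \geq 1 - Z(W^k) = 1 - Z^k$—does not always produce a non-negative bound, so we cannot finish at the level of $Z$ alone. The natural remedy is a compactness argument: the set of symmetric B-DMCs modulo output relabeling is compact in the standard $L$-density topology, the functionals $W \mapsto I(W)$ and $W \mapsto I(W^k)$ are continuous, and the difference $I(W^k) - I(W)$ is strictly positive on the closed subset $\{I(W) \in [\delta,1-\delta]\}$, hence bounded below there by some $c(\delta) > 0$. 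With this ingredient in hand the rest of the argument is bookkeeping.
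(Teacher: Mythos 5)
Your proposal is correct and follows the paper's own route almost step for step: part (ii) via Lemma \ref{lem:square}(ii), and part (i) via the tree process $W_{n+1}=W_n^{(B_{n+1})}$, the bounded martingale $I_n=I(W_n)$, the fact that the polarizing index yields $I_{n+1}=I(W_n^k)$ with probability at least $1/\ell$, a uniform gap for $I(W^k)-I(W)$ when $I(W)$ is bounded away from $0$ and $1$, and the transfer from $I$ to $Z$ through $I+Z\geq 1$ and $I^2+Z^2\leq 1$. The one genuine divergence is how the uniform gap is proved. The paper (Lemma \ref{lem:combinedI}) invokes the extremes-of-information-combining theorem (Theorem \ref{thm:extremes}): among symmetric channels of fixed mutual information the parallel combination is minimized by the BSC, and the explicit computation $I(W_{\BSC(\epsilon)}^2)-I(W_{\BSC(\epsilon)})=h(2\epsilon\bar{\epsilon})-h(\epsilon)$ with $\epsilon=h^{-1}(1-I(W))$ gives a concrete $\eta(\delta)$. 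Your compactness argument (compactness of symmetric B-DMCs modulo equivalence in the density topology, continuity of $W\mapsto I(W)$ and $W\mapsto I(W^k)$, strict positivity of the gap on $0<I(W)<1$) also works, but it is non-quantitative and silently requires three verifications the paper never needs: sequential compactness of the channel space, continuity of the $k$-fold combining under weak convergence, and the pointwise strict inequality $I(W^2)>I(W)$ for $0<I(W)<1$, which itself needs a short argument (if $I(X;Y_2\mid Y_1)=0$ then either $H(X\mid Y_1)=0$ or $W$ is useless). The paper's route trades this machinery for a citation of the information-combining result and gets an explicit bound in return.

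One caution on part (ii): your justification that $A_n$ ``does not change the list of synthetic channels'' because it is merely a coordinate permutation is not valid for a generic row permutation — permuting rows changes the successive-cancellation order and hence the channels $W^{(i)}$ (this is precisely the difference between the Reed--Muller and polar orderings; e.g.\ swapping the rows of an upper triangular $2\times2$ matrix produces $G_2$ up to column permutation, which polarizes). What rescues your argument is that the digit-reversal permutation $A_n$ commutes with $G^{\otimes n}$, so it can be pushed to the right and treated as a column permutation, which is indeed harmless; alternatively, argue as the paper implicitly does by applying Lemma \ref{lem:square}(ii) recursively along the tree process, which never involves $A_n$ at all.
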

In \cite[Section 6]{Ari08}, Ar\i kan proves part (i) of Theorem \ref{thm:main} for $G=G_2$. His proof involves defining a random variable $W_n$ that is uniformly distributed over the set $\{W^{(i)}\}_{i=1}^{\ell^n}$ (where $\ell=2$ for the case $G=G_2$), which implies
\begin{align}
\Pr[I(W_n)\in (a,b)] & =\frac{\lvert\left\{i\in\{1,\dotsc,\ell^n\}: I(W^{(i)}) \in
(a, b) \right\} \rvert}{\ell^n},\label{eqn:I-count}\\
\Pr[Z(W_n)\in (a,b)] & =\frac{\lvert\left\{i\in\{1,\dotsc,\ell^n\}: Z(W^{(i)}) \in
(a, b) \right\} \rvert}{\ell^n}.\label{eqn:Z-count}
\end{align}
Following Ar\i kan, we define the random variable $W_n \in \{W^{(i)}\}_{i=1}^{\ell^n}$ 
for our purpose through a tree process $\{W_n; n\geq 0\}$ with
\begin{align*}
W_0 & = W, \\
W_{n+1} & = W_{n}^{(B_{n+1})},
\end{align*}
where $\{B_n; n\geq 1\}$ is a sequence of i.i.d.\ random variables defined on a
probability space $(\Omega, \mathcal{F}, \mu )$, and where $B_n$ is uniformly
distributed over the set $\{1,\dotsc,\ell\}$. Defining $\mathcal{F}_0 =
\{\emptyset, \Omega\}$ and $\mathcal{F}_n = \sigma (B_1,\dotsc, B_n)$ for $n
\geq 1$, we augment the above process by the processes $\{I_n; n\geq 0\} := \{I(W_n); n\geq 0\}$ and
$\{Z_n; n\geq 0\} := \{Z(W_n); n\geq 0\}$. It is easy to verify that these processes satisfy (\ref{eqn:I-count}) and (\ref{eqn:Z-count}).
%\begin{align*}
%\frac{\lvert\left\{i\in\{1,\dotsc,\ell^n\}: I(W^{(i)}) \in (a,b)
%\right\} \rvert}{\ell^n} = \Pr[I_n \in (a,b)].
%\end{align*}

\begin{observation}\label{obs:Imartingale}
$\{(I_n, \mathcal{F}_n)\}$ is a bounded martingale and therefore converges w.p.\ 1
and in $\mathcal{L}^1$ to a random variable $I_\infty$.
\end{observation}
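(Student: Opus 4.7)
The plan is to verify the three ingredients needed to invoke Doob's martingale convergence theorem: adaptedness, the martingale identity $\mathbb{E}[I_{n+1}\mid\mathcal{F}_n]=I_n$, and uniform boundedness. Adaptedness is immediate: by construction $W_n$ is a deterministic function of $(B_1,\dotsc,B_n)$, hence $I_n=I(W_n)$ is $\mathcal{F}_n$-measurable. Boundedness is also immediate since the mutual information of a B-DMC with uniform input lies in $[0,1]$, so $0\le I_n\le 1$ for every $n$.

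The key step is the martingale identity, and this is where the conservation law from the preliminaries enters. Conditional on $\mathcal{F}_n$ the channel $W_n$ is determined, and $B_{n+1}$ is independent of $\mathcal{F}_n$ and uniform on $\{1,\dotsc,\ell\}$; therefore
\begin{align*}
\mathbb{E}[I_{n+1}\mid\mathcal{F}_n]
&=\mathbb{E}\bigl[I\bigl(W_n^{(B_{n+1})}\bigr)\,\big|\,\mathcal{F}_n\bigr]
=\frac{1}{\ell}\sum_{i=1}^{\ell} I\bigl(W_n^{(i)}\bigr).
\end{align*}
Applying the preliminary identity $\sum_{i=1}^{\ell} I^{(i)}=\ell\, I(W)$ (which relies on the invertibility of $G$) to the channel $W_n$, the right-hand side equals $I(W_n)=I_n$, which establishes the martingale property.

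Given these two observations, Doob's theorem for $L^1$-bounded martingales yields a.s. convergence to some integrable $I_\infty$. Since the martingale is uniformly bounded by $1$, the dominated convergence theorem upgrades a.s. convergence to convergence in $\mathcal{L}^1$, completing the proof.

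I do not anticipate any serious obstacle: the only non-routine point is recognizing that the conservation identity $\sum_i I^{(i)}=\ell\, I(W)$ is exactly what supplies the martingale property when combined with the uniform distribution of $B_{n+1}$. Everything else is a direct citation of Doob's theorem.
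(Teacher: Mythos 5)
Your proof is correct and is exactly the argument the paper intends: the paper leaves the Observation unproved, but its Remark and the preliminary identity $\sum_{i=1}^\ell I^{(i)}=\ell I(W)$ (from invertibility of $G$) point to precisely your chain — adaptedness, boundedness $I_n\in[0,1]$, the martingale identity via the conservation law and the uniform, independent $B_{n+1}$, then Doob plus uniform boundedness for a.s.\ and $\mathcal{L}^1$ convergence. No gaps; same approach.
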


\begin{lemma}[$I_\infty$] \label{lem:I-converges}
If $G$ is polarizing, then
\begin{align*}
I_\infty =
\begin{cases}
1 & \textrm{w.p. } I(W), \\
0 & \textrm{w.p. } 1 - I(W).
\end{cases}
\end{align*}
\end{lemma}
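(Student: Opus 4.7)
The plan is to follow the pattern of Arıkan's proof for $G_2$: use the martingale convergence to pin down the mean and to force the one-step variation to vanish, and then invoke a continuity/compactness argument to exclude the open interval $(0,1)$ as a possible value of $I_\infty$.

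By Observation \ref{obs:Imartingale}, $I_n\to I_\infty$ almost surely and in $\mathcal{L}^1$, so $\mathbb{E}[I_\infty]=I(W)$ and $\mathbb{E}|I_{n+1}-I_n|\to 0$. Since $G$ is polarizing, there exist a fixed index $i^*$ and integer $k\geq 2$ (depending only on $G$) such that $\tilde W^{(i^*)}\equiv W^k$, hence $I(W^{(i^*)})=I(W^k)$ for every symmetric B-DMC $W$. Isolating the event $\{B_{n+1}=i^*\}$ and using the pointwise inequality $I(W^k)\geq I(W)$, I obtain
\[
\mathbb{E}\bigl[I(W_n^k)-I(W_n)\bigr] \;=\; \ell\,\mathbb{E}\!\left[(I_{n+1}-I_n)\indicator{B_{n+1}=i^*}\right] \;\longrightarrow\; 0.
\]

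The central analytic step is to strengthen the pointwise inequality $I(W^k)>I(W)$ for $I(W)\in(0,1)$ into a uniform one: for every $\delta>0$ there should exist $\eta(\delta)>0$ with $I(W^k)-I(W)\geq\eta(\delta)$ whenever $I(W)\in[\delta,1-\delta]$. The pointwise strict inequality is immediate from the chain rule $I(X;Y_1^k)-I(X;Y_1)=I(X;Y_2^k\mid Y_1)$, which is positive unless $Y_1$ already determines $X$ or $X$ is independent of each $Y_j$, i.e., unless $I(W)\in\{0,1\}$. I would prove the uniform strengthening by a compactness argument: the space of symmetric B-DMCs, topologized by weak convergence of L-densities, is compact; both $W\mapsto I(W)$ and $W\mapsto I(W^k)$ are continuous on it; hence the continuous, pointwise-positive function $I(W^k)-I(W)$ admits a positive infimum on the compact set $\{W:I(W)\in[\delta,1-\delta]\}$. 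I expect this compactness/continuity step to be the principal obstacle. An alternative route that avoids it is to work with the Bhattacharyya parameter: $Z_n$ is a bounded supermartingale with $Z(W^k)\leq Z(W)^k$, one can argue that $Z_\infty\in\{0,1\}$ a.s.\ by the explicit bound $Z_n-Z(W_n^k)\geq Z_n(1-Z_n^{k-1})$, and then transfer back to $I$ via the standard equivalences $Z(W)=0\Leftrightarrow I(W)=1$ and $Z(W)=1\Leftrightarrow I(W)=0$.

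With the uniform estimate in hand, the lemma closes as follows. Suppose for contradiction that $\Pr[I_\infty\in(0,1)]>0$. Then $\Pr[I_\infty\in(2\delta,1-2\delta)]>0$ for some $\delta>0$, and by almost-sure convergence $\Pr[I_n\in(\delta,1-\delta)]\geq c>0$ for all sufficiently large $n$. On this event $I(W_n^k)-I(W_n)\geq\eta(\delta)$, contradicting the display above. Hence $I_\infty\in\{0,1\}$ almost surely, and combining with $\mathbb{E}[I_\infty]=I(W)$ yields $\Pr[I_\infty=1]=I(W)$.
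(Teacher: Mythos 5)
Your argument is correct, and its skeleton is exactly the paper's: martingale convergence in $\mathcal{L}^1$ (Observation~\ref{obs:Imartingale}), isolating the event $\{B_{n+1}=i^*\}$ to conclude $\mathbb{E}[I(W_n^k)-I(W_n)]\to 0$, a uniform lower bound on $I(W^k)-I(W)$ when $I(W)\in(\delta,1-\delta)$, and finally $\mathbb{E}[I_\infty]=I(W)$ to fix the two-point distribution. The genuine difference is how the uniform gap is established. You propose a soft compactness/continuity argument: the space of symmetric channels (identified with their $|D|$-distributions on $[0,1]$) is weakly compact, $W\mapsto I(W)$ and $W\mapsto I(W^k)$ are weakly continuous, and a continuous, pointwise-positive function on the closed set $\{I(W)\in[\delta,1-\delta]\}$ has a positive infimum. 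This can be made rigorous with the density-evolution machinery, but it is non-constructive and you leave the continuity/compactness claims as a sketch. The paper instead proves the gap quantitatively (Lemma~\ref{lem:combinedI}) via the extremes-of-information-combining theorem (Theorem~\ref{thm:extremes}): among all symmetric channels with a given mutual information, $I(W^2)$ is minimized by the BSC, giving the explicit bound $I(W^k)-I(W)\geq h(2\epsilon\bar\epsilon)-h(\epsilon)$ with $\epsilon=h^{-1}(1-I(W))$, which is uniformly positive on $I(W)\in(\delta,1-\delta)$. The paper's route buys an explicit $\eta(\delta)$ and avoids putting a topology on channel space; yours is more self-contained conceptually (chain rule plus compactness) but heavier to justify in full.

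One caution about your fallback route via the Bhattacharyya process: you assert that $\{Z_n\}$ is a bounded supermartingale for general $G$, but that is precisely what the paper points out is hard to establish beyond $G_2$ (see the remark following Theorem~\ref{thm:main}); without some martingale-type property you do not get convergence of $Z_n$, hence no analogue of $\mathbb{E}|Z_{n+1}-Z_n|\to 0$, and the identity $Z(W^k)=Z(W)^k$ alone does not close the argument. So that alternative, as stated, does not go through; the $I$-process argument you give as your main line is the right one.
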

\begin{proof}
For any polarizing transformation $G$, Lemma~\ref{lem:square} implies that there
exists an $i\in\{1,\dotsc,\ell\}$ and $k\geq 2$ for which 
\begin{align}
I^{(i)} & = I(W^k). \label{eqn:prod-channel}
\end{align}
This implies that for the tree process defined above, we have
\begin{align*}
I_{n+1} = I(W_n^k) \textrm{ with probability at least } \frac{1}{\ell},
\end{align*}
for some $k\geq 2$. 
Moreover by the convergence in $\mathcal{L}^1$ of $I_n$, we have $\mathbb{E}
[|I_{n+1} - I_n | ] \stackrel{n \to \infty}{\longrightarrow}0$.
This in turn implies 
\begin{align}
\mathbb{E} [|I_{n+1} - I_n |] \geq \frac{1}{\ell} \mathbb{E}[(I(W_n^k) - I(W_n)] \to 0. \label{eqn:I-repetition}
\end{align}
It is shown in Lemma~\ref{lem:combinedI} in the Appendix that for any symmetric B-DMC $W_n$, if $I(W_n) \in (\delta,1-\delta)$ for some $\delta > 0$, then there exists an $\eta (\delta) > 0$ such that $I(W_n^k) - I(W_n) > \eta (\delta)$. Therefore, convergence in (\ref{eqn:I-repetition}) implies $I_\infty \in \{0,1\}$ w.p.\ 1. The claim on the probability distribution of $I_\infty$ follows from the fact that $\{I_n\}$ is a martingale, i.e., $\mathbb{E} [I_\infty] = \mathbb{E} [I_0] = I(W)$.
\end{proof}

{\em Proof of Theorem \ref{thm:main}: }
Note that for any $n$ the fraction in
\eqref{eqn:fractionI} is equal to $\Pr[I_n \in (\delta,
1-\delta)]$. Combined with Lemma~\ref{lem:I-converges}, this implies
\eqref{eqn:fractionI}. 

For any B-DMC $Q$, $I(Q)$ and $Z(Q)$ satisfy 
\cite{Ari08}
\begin{align*}
I(Q)^2 + Z(Q)^2 \leq 1, \\
I(Q) + Z(Q) \geq 1.
\end{align*}
When $I(Q)$ takes on the value $0$ or $1$, these two inequalities imply that
$Z(Q)$ takes on the value $1$ or $0$, respectively. 
From Lemma~\ref{lem:I-converges} we know that $\{I_n\}$ converges to  $I_\infty$ w.p. $1$
and $I_\infty \in \{0,1\}$. This
implies that $\{Z_n\}$ converges w.p. $1$ to a random variable $Z_\infty$ and
\begin{align*}
Z_\infty =
\begin{cases}
0 & \textrm{w.p. } I(W), \\
1 & \textrm{w.p. } 1 - I(W).
\end{cases}
\end{align*}
This proves the first part of the theorem.
The second part follows from 
Lemma~\ref{lem:square}, (ii). \qed

\begin{remark}
Ar\i kan's proof for part (i) of Theorem \ref{thm:main} with $G=G_2$ proceeds by first showing the
convergence of $\{Z_n\}$, instead of $\{I_n\}$. This is accomplished by
showing that for the matrix $G_2$ the resulting process $\{Z_n\}$ is a
submartingale. Such a property is in general difficult to prove for arbitrary $G$. On
the other hand, the process $\{I_n\}$ is a martingale for any invertible matrix
$G$, which is sufficient to ensure convergence.
\end{remark}

Theorem~\ref{thm:main} guarantees that repeated application of a polarizing 
matrix $G$ polarizes the underlying channel $W$, i.e.,
the resulting channels $W^{(i)}$, $i\in \{1,\dots,\ell^n\}$, tend towards
either a noiseless or a completely noisy channel. Lemma~\ref{lem:I-converges}
ensures that the fraction of noiseless channels
is indeed $I(W)$. This suggests to use the noiseless channels for transmitting
information while transmitting no information over the noisy channels \cite{Ari08}. 
Let $\mathcal{A}\subset \{1,\dotsc,\ell^n\}$ denote the set of channels $W^{(i)}$ used for transmitting the information bits. 
%as the set (or one of the many sets) of channels with the smallest $2^{\ell^{n}R}$ $Z^{(i)}$s, this rule at finite blocklengths translates to transmitting information over the channels in $\mathcal{A}(R)$, at rate $R$.
Since $Z^{(i)}$ upper bounds the error probability of decoding bit $U_i$ with the knowledge of $U_1^{i-1}$, the block error probability of such a transmission scheme under successive cancellation decoder can be upper bounded as \cite{Ari08}
\begin{align}\label{eqn:upperbnd}
P_B \leq \sum_{i\in\mathcal{A}} Z^{(i)}.
\end{align}
Further, the block error probability can also be lower bounded in terms of the $Z^{(i)}$s: Consider a symmetric	B-DMC with Bhattacharyya parameter $Z$, and let $P_e$ denote the bit error probability of uncoded transmission over this channel. It is known that 
\[
P_e \geq \frac 12 ( 1-\sqrt{1-Z^2}).
\]
A proof of this fact is provided in the Appendix. Under successive cancellation decoding,
 the block error probability is lower bounded by each of the bit error probabilities over the channels $W^{(i)}$. Therefore the former quantity can be lower bounded by 
\begin{align}\label{eqn:lowerbnd}
P_B \geq \max_{i\in\mathcal{A}} \frac 12 (1-\sqrt{1-(Z^{(i)})^2}).
\end{align} 

Both the above upper and lower bounds to the block error probability look somewhat loose at a first look. However, as we shall see later, these bounds are sufficiently tight for our purposes. Therefore, it suffices to analyze the behavior of the $Z^{(i)}$s.

%By using the union bound as done in \cite{Ari08}, Theorem~\ref{thm:universalBound}
%in turn implies that the 
%block error probability for polar codes under successive cancellation decoding decays at least like
%$2^{-N^\beta}$ for any $\beta < \frac{\log_\ell 2}{\ell}$.
%The above definition implies that the block error probability of polar codes under successive cancellation decoding is upper bounded by $2^{-\ell^{n\beta}}$ for any $\beta < \exponent(G)$ \cite{Ari08}.

\section{Rate of Polarization}
For the matrix $G_2$ 
Ar\i kan shows that, combined with successive cancellation decoding,
these codes achieve a vanishing block error probability for any rate strictly
less than $I(W)$. Moreover, it is shown in \cite{ArT08} that when $Z_n$ 
approaches 0 it does so at a sufficiently fast rate:
\begin{theorem}[\cite{ArT08}]
Given a B-DMC $W$, the matrix $G_2$ and any
$\beta < \frac12$,
\begin{align*}
\lim_{n \to \infty} \Pr[Z_n \leq 2^{-2^{n\beta}}] = I(W).
\end{align*}
\end{theorem}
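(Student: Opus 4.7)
My plan is to combine the polarization theorem with a sharp analysis of the Bhattacharyya recursions specific to $G=G_2$. For $G_2$, the two children of a channel $W$ satisfy $Z(W^{(2)})=Z(W)^2$ (``plus step'') and $Z(W^{(1)})\le 2Z(W)-Z(W)^2\le 2Z(W)$ (``minus step''), and in the tree process of Theorem~\ref{thm:main} each is selected with probability $1/2$. Writing $L_n:=-\log_2 Z_n$, these translate to $L_{n+1}=2L_n$ on plus steps and $L_{n+1}\ge L_n-1$ on minus steps. On the polarization event $\{Z_\infty=0\}$, which by Theorem~\ref{thm:main} has probability $I(W)$, I expect $L_n\sim 2^{n/2}$, because roughly half of the steps double $L$ while the other half decrement it by only one.

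I would carry this out in three stages. \emph{First}, Theorem~\ref{thm:main} supplies, for any $\eta>0$ and any target constant $C$, an $n_0$ with $\Pr[L_{n_0}\ge C]\ge I(W)-\eta/3$. \emph{Second}, let $\tau_K$ denote the time of the $K$-th plus step after $n_0$ (with $\tau_0=n_0$) and let $m_k:=\tau_{k+1}-\tau_k-1$ count the intervening minus steps; induction on $L_{\tau_{k+1}}\ge 2(L_{\tau_k}-m_k)^+$ yields
\begin{align*}
L_{\tau_K}\ge 2^{K}\Bigl(L_{n_0}-\sum_{k=0}^{K-1}2^{-k}m_k\Bigr)
\end{align*}
whenever the right-hand side is positive. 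Because the $m_k$ are i.i.d.\ $\mathrm{Geometric}(1/2)$ with $\Pr[m_k=j]=2^{-(j+1)}$, the series $M:=\sum_{k\ge 0}2^{-k}m_k$ is a.s.\ finite with exponentially decaying tails, so I may choose $C$ large enough that $\Pr[M\le C-1]\ge 1-\eta/3$; combined with the first event, this forces $L_{\tau_K}\ge 2^{K}$ for every $K$. \emph{Third}, I translate $\tau_K$ back to $n$: since $\tau_K-n_0=K+\sum_{k<K}m_k$ with $\mathbb{E}[m_k]=1$, a Chernoff bound gives, for any $\beta<1/2$, $\tau_K\le K/\beta$ eventually with probability at least $1-\eta/3$, i.e.\ $K\ge\beta\tau_K$. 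Intersecting the three events (total probability $\ge I(W)-\eta$) gives $Z_{\tau_K}\le 2^{-2^K}\le 2^{-2^{\beta\tau_K}}$; interpolating to general $n\in(\tau_K,\tau_{K+1})$ costs at most $m_K$ in $L_n$, which is $O(\log n)$ eventually by Borel--Cantelli on the geometric $m_K$, so the bound $Z_n\le 2^{-2^{n\beta'}}$ persists for any $\beta'<\beta$. Letting $\eta\to 0$ gives $\liminf \Pr[Z_n\le 2^{-2^{n\beta'}}]\ge I(W)$; the matching $\limsup\le I(W)$ follows from $Z_n\to Z_\infty\in\{0,1\}$ with $\Pr[Z_\infty=0]=I(W)$.

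The main obstacle is that counting plus steps via a naive Bernoulli LLN is too coarse: the value of $L_n$ is acutely sensitive to the \emph{order} of plus and minus steps, because a bad run of minuses immediately after a plus can undo an entire doubling. The key idea that makes the argument go through is to parameterize the process by the plus-step times $\tau_K$ and to aggregate all the damage from intervening minuses into the geometrically weighted random series $\sum 2^{-k}m_k$, whose rapid tail decay allows the whole chain of doublings to be launched from a single sufficiently large initial value $L_{n_0}$ obtained from the polarization theorem.
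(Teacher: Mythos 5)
The paper gives no proof of this statement at all---it is quoted from \cite{ArT08}---so the only comparison available is with the Ar\i kan--Telatar argument that the paper itself leans on later (the ``bootstrapping'' step invoked in the proof ideas of Theorems 4 and 7). Your proof is correct in substance but takes a genuinely different route. Ar\i kan and Telatar also start from polarization (to get $Z_m$ below a small constant with probability close to $I(W)$), but then control $\log\log(1/Z_n)$ by counting squarings over a long block and absorbing the factor-$2$ losses through an iterated bootstrap that successively improves the exponent from $0$ toward $\tfrac12$. You instead re-parameterize time by the squaring (plus-step) instants $\tau_K$, note that on $[\tau_K,\tau_{K+1})$ the loss in $L_n=-\log_2 Z_n$ is at most the geometric gap $m_K$, and aggregate all losses into the a.s.\ finite weighted series $\sum_k 2^{-k}m_k$; a single sufficiently large initial value $L_{n_0}$, supplied by the polarization theorem, then launches an unbroken chain of exact doublings, and the Chernoff/LLN control of $\tau_K/K\to 2$ converts $L_{\tau_K}\ge 2^K$ into any exponent $\beta<\tfrac12$. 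This cleanly avoids the bootstrap, at the price of being tailored to $G_2$ (exact squaring on plus steps, loss of one bit per minus step); the bootstrap is what one needs anyway for general $\ell\times\ell$ matrices, where the constants $2^{\ell-i}$ must be beaten down the same way. Two small points to tidy: (i) the statement is for an arbitrary B-DMC, whereas Theorem~\ref{thm:main}, which you use both for the initial condition and for the matching $\limsup\le I(W)$, is stated for symmetric channels---for $G_2$ you should invoke Ar\i kan's original polarization theorem, which needs no symmetry; (ii) when passing from statements holding ``for all $K\ge K_0$'' along the random times $\tau_K$ to the limit over deterministic $n$, observe that on your good event $\tau_{K_0}\le K_0/\beta$, so the bound holds for all $n$ beyond a deterministic threshold (and the Borel--Cantelli event giving $m_K=O(\log K)$ should be folded into the same union bound); with that bookkeeping the $\liminf\ge I(W)-\eta$ conclusion is airtight.
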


A similar result for arbitrary $G$ is given in the following theorem. 
%This in turn implies that the block 
%error probability of polar codes under successive cancellation decoding is 
%$o(2^{-\ell^{n\beta}})$ for any $\beta < \frac{\log_\ell2}{\ell}$.

\begin{theorem}[Universal Bound on Rate of Polarization]\label{thm:universalBound}
Given a symmetric B-DMC $W$, an $\ell\times\ell$ polarizing matrix $G$, and any
$\beta < \frac{\log_\ell 2}{\ell}$,
\begin{align*}
\lim_{n \to \infty} \Pr[Z_n \leq 2^{-\ell^{n\beta}}] = I(W).
\end{align*}
\end{theorem}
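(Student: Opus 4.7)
My strategy would parallel Ar\i kan--Telatar's treatment of $G_2$ in \cite{ArT08} and consist of two main ingredients. First, I would establish per-step control on the process $\{Z_n\}$. Since $G$ is polarizing, Lemma~\ref{lem:square} furnishes an index $i^\star$ with $W^{(i^\star)} \equiv W^k$ for some $k\geq 2$, so $Z^{(i^\star)} = Z(W^k) \leq Z^2$; this yields a ``squaring'' transition $Z_{n+1} \leq Z_n^2$ with probability at least $1/\ell$ (namely when $B_{n+1} = i^\star$). For any other index $i$, a direct inspection of $\tilde W^{(i)}$ in \eqref{equ:channelwtilde} would give a universal ``loss'' bound $Z^{(i)} \leq c_\ell Z$ for some constant $c_\ell = c_\ell(G)\geq 1$. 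Setting $L_n := \log_2(1/Z_n)$ and $c' := \log_2 c_\ell$, the process then satisfies $L_{n+1} \geq 2 L_n$ with probability $\geq 1/\ell$, and $L_{n+1} \geq L_n - c'$ always.

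Second, I would apply a Chernoff bound to the number $S_m$ of squaring steps in a window of length $m$ starting from some time $n_0$. Since $S_m$ stochastically dominates a Binomial$(m,1/\ell)$ variable, one gets $\Pr[S_m \geq (1/\ell - \epsilon) m] \geq 1 - e^{-\Omega_\epsilon(m)}$. A short induction on $m$ shows that, regardless of the order in which squaring and loss steps occur,
\[
L_{n_0+m} \geq 2^{S_m}\bigl(L_{n_0} - (m-S_m)\,c'\bigr),
\]
provided the right-hand side is nonnegative.

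Combining these pieces, I would fix $\beta < \log_\ell 2/\ell = 1/(\ell \log_2 \ell)$ and choose $\epsilon > 0$ with $1/\ell - \epsilon > \beta \log_2 \ell$. By Theorem~\ref{thm:main}, $L_n \to \infty$ almost surely on $\{Z_\infty = 0\}$, an event of probability $I(W)$. For each $\eta > 0$ and each sufficiently large $n$, the goal would be to exhibit a time $n_0 = o(n)$ and a threshold $K_n \geq 2 c' n$ such that $\Pr[L_{n_0} \geq K_n] \geq I(W) - \eta/2$. Chernoff combined with the recursion above would then yield $L_n \geq 2^{(1/\ell - \epsilon)(n - n_0)} \geq \ell^{n\beta}$ for $n$ large, with overall probability at least $I(W) - \eta$.

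The hard part is producing $n_0$ and $K_n$ \emph{jointly}: $L_{n_0}$ must already be at least linear in $n$ at a time $n_0 = o(n)$, which is not directly implied by the bare a.s.\ divergence of $L_n$ on $\{Z_\infty = 0\}$. My plan to handle this would be a bootstrap: first apply the Chernoff/recursion machinery with a \emph{fixed} threshold $K$ to prove a weaker linear-growth statement of the form $\Pr[L_{\alpha n} \geq \gamma n] \geq I(W) - \eta$ for some $\alpha,\gamma > 0$, and then feed this intermediate estimate into the argument above in place of the qualitative divergence, allowing one to realize any exponent $\beta < \log_\ell 2 /\ell$.
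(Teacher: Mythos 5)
Your proposal is correct and takes essentially the same route as the paper, whose own ``proof idea'' consists precisely of the two per-step bounds you establish---$Z_{n+1}\leq \ell Z_n$ with probability $1$ (your $c_\ell Z_n$) and $Z_{n+1}\leq Z_n^2$ with probability at least $1/\ell$ via the polarizing index---followed by adapting the bootstrapping argument of \cite{ArT08}. The difficulty you single out (making $L_{n_0}$ linear in $n$ at a time $o(n)$ before running the Chernoff/recursion step) is exactly what that bootstrapping handles, and the paper defers those details to \cite{ArT08} just as you do.
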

{\em Proof Idea: }
For any polarizing matrix it can be shown that $Z_{n+1}\leq \ell Z_n$
with probability 1 and that $Z_{n+1} \leq Z_n^2$  with probability at least
$1/\ell$. The proof then follows by adapting the proof of \cite[Theorem 3]{ArT08}.
\qed

%We have seen that any $\ell \times \ell$ matrix none of whose column
%permutations is upper triangular polarizes symmetric B-DMCs. Moreover, combining
%this construction with successive cancellation decoding, a block
%error probability of $2^{-{\ell}^{n\beta}}$ for any $\beta < \frac{\log_\ell
%2}{\ell}$ can be achieved. 
The above estimation of the probability is universal and is independent of
the exact structure of $G$. We are now interested in a more precise
estimate of this probability.
The results in this section are the natural generalization of those in \cite{ArT08}.
%, where
%it was shown that under successive cancellation decoding 
%the matrix $G_2$ achieves a block error probability of
%$o(2^{-2^{n\beta}})$ for any $\beta < \frac12$.

\begin{definition}[Rate of Polarization]\label{def:rate}
For any B-DMC $W$ with $0<I(W)<1$, 
we will say that an $\ell\times\ell$ matrix $G$ has rate of polarization $\exponent(G)$ if
\begin{itemize}
\item[(i)] For any fixed $\beta < \exponent(G)$,
%\begin{align}\label{def:rate1} 
\[
\liminf_{n \to \infty} \Pr[Z_n \leq 2^{-\ell^{n\beta}}] = I(W).
\]
%\end{align}
\item[(ii)] For any fixed $\beta > \exponent(G)$,
%\begin{align} \label{def:rate2}
\[
\liminf_{n\to\infty} \Pr[Z_n \geq 2^{-\ell^{n\beta}}] = 1.
\]
%\end{align}
\end{itemize}
\end{definition}
For convenience, in the rest of the paper we refer to $\exponent(G)$ as the exponent of the
matrix $G$.

The definition of exponent provides a meaningful performance measure of polar codes under successive cancellation decoding. This can be seen as follows: Consider a matrix $G$ with exponent $\exponent(G)$. Fix $0<R<I(W)$ and $\beta < \exponent(G)$. Definition \ref{def:rate} (i) implies that for $n$ sufficiently large there exists a set $\mathcal{A}$ of size $\ell^{n}R$ such that $\sum_{i\in \mathcal{A}} Z^{(i)} \leq 2^{-\ell^{n\beta}}$. Using set $\mathcal{A}$ as the set of information bits, the block error probability under successive cancellation decoding $P_B$ can be bounded using
\eqref{eqn:upperbnd} as 
\begin{align*}
P_B \leq 2^{-\ell^{n\beta}}.
\end{align*} 
Conversely, consider $R>0$ and $\beta> \exponent(G)$. Definition \ref{def:rate} (ii) implies that for $n$ sufficiently large, any set $\mathcal{A}$ of size $\ell^nR$ will satisfy $\max_{i\in\mathcal{A}} Z^{(i)} > 2^{-\ell^{n\beta}}$. Using \eqref{eqn:lowerbnd} the block
error probability can be lower bounded as 
\begin{align*}
P_B \geq 2^{-\ell^{n\beta}}.
\end{align*}

It turns out, and it will be shown later, that the exponent is independent of the
channel $W$. Indeed, we will show in Theorem~\ref{thm:l-ary} that the exponent
$\exponent(G)$ can be expressed as a
function of the {\em partial distances} of $G$.
\begin{definition}[Partial Distances]\label{def:partial-d}
Given an $\ell\times\ell$ matrix $G = [g_1^T,\dotsc,g_\ell^T]^T$, we define the \emph{partial distances} $D_i$, $i=1,\dotsc,\ell$ as
\begin{align*}
D_i & \triangleq \dh(g_i,\langle g_{i+1},\dotsc,g_\ell\rangle), \qquad i=1,\dotsc,\ell-1, \\
D_\ell & \triangleq \dh(g_\ell,0).
\end{align*}
\end{definition}
\begin{example}\label{ex:partial-d}
The partial distances of the matrix 
\begin{align*}
F = \left[
\begin{array}{ccc}
1& 0& 0 \\
1& 0& 1 \\
1& 1& 1 \\
\end{array}\right]
\end{align*}
are $D_1 = 1, D_2 = 1,D_3 =3$.
\end{example}
In order to establish the relationship between $\exponent(G)$ and the partial distances of $G$
we consider the Bhattacharyya parameters $Z^{(i)}$ of the channels $W^{(i)}$. These parameters 
depend on $G$ as well as on $W$. The exact
relationship with respect to $W$ is difficult to compute in general. 
However, there are sufficiently tight upper and lower bounds
on the $Z^{(i)}$s in terms of $Z(W)$, the Battacharyya parameter of $W$.
\begin{lemma}[Bhattacharyya Parameter and Partial Distance]\label{lem:z-bound}
For any symmetric B-DMC $W$ and any $\ell\times\ell$ matrix $G$ with partial
distances $\{D_i\}_{i=1}^\ell$
\begin{align}\label{eqn:z-bound}
Z(W)^{D_i} \leq Z^{(i)} \leq 2^{\ell-i} Z(W)^{D_i}.
\end{align} 
\end{lemma}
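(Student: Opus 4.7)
The plan is to reduce to bounding $Z(\tilde{W}^{(i)})$ and then handle the two inequalities separately. By Observation~\ref{obs:equivalent}, for each $u_1^{i-1}$ the permutation $\pi_{u_1^{i-1}}$ simply relabels the outputs, so
\[
Z^{(i)}=\sum_{u_1^{i-1}}\frac{1}{2^{i-1}}\sum_{y_1^\ell}\sqrt{\tilde{W}^{(i)}(\pi_{u_1^{i-1}}(y_1^\ell)\mid 0)\,\tilde{W}^{(i)}(\pi_{u_1^{i-1}}(y_1^\ell)\mid 1)}=Z(\tilde{W}^{(i)}).
\]
It therefore suffices to sandwich $Z(\tilde{W}^{(i)})$ between $Z(W)^{D_i}$ and $2^{\ell-i}Z(W)^{D_i}$.

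For the upper bound, I would expand $\tilde{W}^{(i)}$ by its definition and apply the inequality $\sqrt{\sum_j a_j}\leq\sum_j\sqrt{a_j}$ to pull both marginalization sums outside the square root. By the product structure of $W_\ell$, each resulting summand equals $Z(W)^{\dh(x,x')}$ with $x=(0_1^{i-1},0,u_{i+1}^\ell)G$ and $x'=(0_1^{i-1},1,v_{i+1}^\ell)G$. Their difference lies in $g_i+\langle g_{i+1},\ldots,g_\ell\rangle$, whence $\dh(x,x')\geq D_i$; combined with the count $2^{2(\ell-i)}$ of such pairs and the prefactor $1/2^{\ell-i}$ this gives $Z^{(i)}\leq 2^{\ell-i}Z(W)^{D_i}$.

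For the lower bound, the key ingredient is a row-reduction invariance: for any $c=\sum_{k>i}a_kg_k\in\langle g_{i+1},\ldots,g_\ell\rangle$, replacing $g_i$ by $g_i+c$ leaves $\tilde{W}^{(i)}$ invariant (the reparametrization $u_k\mapsto u_k+u_ia_k$ is a bijection of $\{0,1\}^{\ell-i}$ for each fixed $u_i$, and therefore does not affect the marginalization) and leaves every partial distance $D_j$ invariant (the subspaces $\langle g_{j+1},\ldots,g_\ell\rangle$ defining the $D_j$'s are unchanged). Choosing $c$ such that $\dh(g_i,c)=D_i$, I may therefore assume $g_i$ has Hamming weight exactly $D_i$.

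After this normalization, a genie/data-processing argument finishes the proof. The augmented channel with transition $\frac{1}{2^{\ell-i}}W_\ell(y_1^\ell\mid 0_1^{i-1},u_i,u_{i+1}^\ell)$ and output $(y_1^\ell,u_{i+1}^\ell)$ is obtained from $\tilde{W}^{(i)}$ by revealing $u_{i+1}^\ell$ to the receiver; a Cauchy--Schwarz argument then yields that revealing side information can only decrease the Bhattacharyya parameter. For each fixed $u_{i+1}^\ell$ the two codewords $(0_1^{i-1},0,u_{i+1}^\ell)G$ and $(0_1^{i-1},1,u_{i+1}^\ell)G$ differ by exactly $g_i$, of Hamming weight $D_i$, so the augmented channel has Bhattacharyya parameter $Z(W)^{D_i}$, giving $Z^{(i)}\geq Z(W)^{D_i}$. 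I expect the row-reduction invariance to be the main subtlety: without it the naive genie argument only yields $Z(W)^{w(g_i)}$, which can be strictly weaker than $Z(W)^{D_i}$.
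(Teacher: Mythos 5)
Your proposal is correct and follows essentially the same route as the paper: the upper bound by pulling the marginalization sums out of the square root and using $\dh(x,x')\geq D_i$, and the lower bound by first replacing $g_i$ with $g_i+c$ (the paper's $g_i'$) to make its weight exactly $D_i$ and then using the genie that reveals $u_{i+1}^\ell$ together with degradation of $\tilde{W}^{(i)}$ relative to the genie-aided channel. The only cosmetic differences are that you work with $\tilde{W}^{(i)}$ for both bounds (the paper does so only for the lower bound) and that you justify the degradation step by an explicit Cauchy--Schwarz inequality rather than citing preservation of the Bhattacharyya ordering under degradation.
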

\begin{proof}
To prove the upper bound we write
\begin{align}
Z^{(i)} &= \sum_{y_1^\ell,u_1^{i-1}}\sqrt{W^{(i)}(y_1^\ell, u_1^{i-1}\mid 0) W^{(i)}(y_1^\ell, u_1^{i-1}\mid 1)} \notag\\
& \stackrel{(\ref{equ:transitionprobabilities})}{=} \frac{1}{2^{\ell-1}}\sum_{y_1^\ell,u_1^{i-1}} \notag\\
&\phantom{xx}\sqrt{\sum_{v_{i+1}^{\ell},w_{i+1}^\ell} W_\ell(y_1^\ell\mid u_1^{i-1}, 0, v_{i+1}^\ell) W_\ell(y_1^\ell\mid u_1^{i-1}, 1, w_{i+1}^\ell)} \notag\\ 
&\leq \frac{1}{2^{\ell-1}}\sum_{y_1^\ell,
u_1^{i-1}}\sum_{v_{i+1}^\ell,w_{i+1}^\ell} \sqrt{W_\ell(y_1^\ell\mid u_1^{i-1},
0, v_{i+1}^\ell)}\notag \\
 &\phantom{xxxxxxxxxxxxxxxxxxx}\cdot\sqrt{W_\ell(y_1^\ell\mid u_1^{i-1}, 1, w_{i+1}^\ell)}. \label{eqn:root}
\end{align} Let $c_0 = (u_1^{i-1},0,v_{i+1}^\ell) G$ and $c_1 = (u_1^{i-1},1,w_{i+1}^\ell) G$. 
Let $S_0(S_1)$ be the set of indices where both $c_0$ and $c_1$ are equal to $0(1)$. Let $S^c$ be the complement of $S_0\cup S_1$. We have 
\begin{align*}
|S^c| = \dh(c_0, c_1) \geq D_i.
\end{align*}
Now, (\ref{eqn:root}) can be rewritten as 
\begin{align*}
Z^{(i)}&\leq
\frac{1}{2^{\ell-1}}\sum_{v_{i+1}^{\ell},w_{i+1}^\ell}\sum_{y_1^\ell,u_1^{i-1}} \prod_{j\in S_0} W(y_j\mid 0) \prod_{j\in S_1} W(y_j\mid 1)\\
&\phantom{xxxxxx}\cdot\sqrt{\prod_{j\in S^c}W(y_j\mid 0) W(y_j\mid 1)} \\
&\leq \frac{1}{2^{\ell-1}}\sum_{v_{i+1}^{\ell},w_{i+1}^\ell,u_1^{i-1}} Z^{D_i}\\
&= 2^{\ell-i} Z^{D_i}.
\end{align*}

For the lower bound on $Z^{(i)}$, first note that by Observation
\ref{obs:equivalent}, we have $Z(W^{(i)}) = Z(\tilde{W}^{(i)})$. Therefore it
suffices to show the claim for the channel $\tilde{W}^{(i)}$. Let $G = [g_1^T,\dots,g_\ell^T]^T$. Then using \eqref{equ:channelwl}, \eqref{equ:transitionprobabilities} and
\eqref{equ:channelwtilde}, $\tilde{W}^{(i)}$ can be written as
\begin{align}
\tilde{W}^{(i)} (y_1^\ell\mid u_i) =
\frac{1}{2^{\ell-i}}\sum_{x_1^\ell \in \mathcal{A}(u_i)}\prod_{k=1}^\ell W(y_k\vert x_k)
\end{align}
where $x_1^\ell \in \mathcal{A}(u_i)\subset \{0,1\}^\ell$ if and only if for some $u_{i+1}^\ell \in \{0,1\}^{\ell-i}$ 
%\begin{align} 
%\mathcal{A}(u_i) = \{x_1^\ell: x_1^\ell = (0_1^{i-1},u_i,u_{i+1}^\ell)G \textrm{ for some } u_{i+1}^\ell\}.
%\end{align} 

\begin{align}\label{equ:wtildewithG}
x_1^\ell = u_i g_i + \sum_{j=i+1}^\ell u_jg_j.
\end{align}
Consider the code $\langle g_{i+1},\dots,g_\ell\rangle$ and
let $\sum_{j=i+1}^\ell \alpha_j g_j$ be a codeword satisfying
$\dh(g_i,\sum_{j=i+1}^\ell \alpha_j g_j) = D_i$. Due to the linearity of the code $\langle g_{i+1}\dotsc,g_\ell\rangle$, 
one can equivalently say that $ x_1^\ell \in \mathcal{A}(u_i)$ if and only if 
\begin{align}\label{eqn:wtildewithGprime}
x_1^\ell = u_i\big( g_i+
\sum_{j=i+1}^\ell \alpha_j g_j\big) +
\sum_{j=i+1}^\ell u_j g_j.
\end{align}

Now let $g'_i = g_i + \sum_{j=i+1}^\ell \alpha_j g_j$ and $G' =
[g_1^T,\dots,g_{i-1}^T,{g'}^T_i,g_{i+1}^T,\dots,g_\ell^T]^T$. Equations 
(\ref{equ:wtildewithG}) and (\ref{eqn:wtildewithGprime}) show 
that the channels $W^{(i)}$ defined by the matrices  $G$ and $G'$ are equivalent. 
Note that $G'$ has the property that the Hamming
weight of $g'_i$ is equal to  $D_i$. 
%Hence, without loss of generality, we
%can assume in the sequel that $G$ has the property that all its rows 
%have weights equal to their corresponding partial distances.

We will now consider a channel $W_{g}^{(i)}$ where a genie provides extra information to the decoder. Since $\tilde{W}^{(i)}$ is degraded with respect to the
genie-aided channel $W_{g}^{(i)}$, and since the ordering of the
Bhattacharyya parameter is preserved under degradation, it suffices to find a
genie-aided channel for which $Z_{g}^{(i)} = Z(W)^{D_i}$.

Consider a genie which reveals the bits $u_{i+1}^\ell$ to the decoder (Figure \ref{fig:geniechannel}).
With the knowledge of $u_{i+1}^\ell$ the decoder's task reduces to finding the
value of any of the transmitted bits $x_j$ for which $g_{ij} = 1$. Since each bit $x_j$ goes
through an independent copy of $W$, and since the weight of $g_i$ is equal to $D_i$,
the resulting channel $W_g^{(i)}$ is equivalent to $D_i$ independent copies of
$W$. Hence, $Z_g^{(i)} = Z(W)^{D_i}$.

\begin{figure}
\begin{centering}
\input{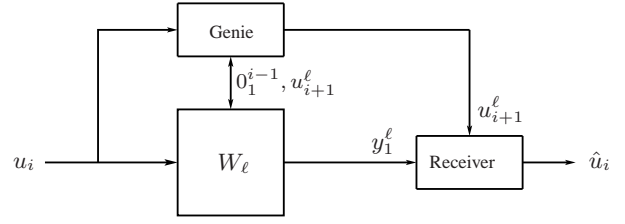}
\caption{Genie-aided channel $W^{(i)}_g$.}\label{fig:geniechannel}
\end{centering}
\end{figure}
\end{proof}
Lemma~\ref{lem:z-bound} shows that the link between 
$Z^{(i)}$ and $Z(W)$ is given in terms of the partial distances of $G$. 
This link is sufficiently strong to completely characterize $\exponent(G)$.
\begin{theorem}[Exponent from Partial Distances]\label{thm:l-ary}
For any symmetric B-DMC $W$ and any $\ell\times\ell$ matrix $G$ with partial distances $\{D_i\}_{i=1}^\ell$, the rate
of polarization $\exponent(G)$ is given by
\begin{align}
\exponent(G) = \frac{1}{\ell}\sum_{i=1}^\ell \log_\ell D_i.\label{eqn:rate-distance}
\end{align}
\end{theorem}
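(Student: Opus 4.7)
The plan is to turn Lemma~\ref{lem:z-bound} into a random recursion for $\{Z_n\}$ and reduce the computation of the exponent to a strong-law-of-large-numbers (SLLN) statement, following the scheme used by Ar\i kan-Telatar \cite{ArT08} for $G_2$. Since $W_n$ is almost surely a symmetric B-DMC, Lemma~\ref{lem:z-bound} applied to $W_n$ yields
\[
Z_n^{D_{B_{n+1}}} \;\leq\; Z_{n+1} \;\leq\; 2^{\ell}\, Z_n^{D_{B_{n+1}}} \quad\text{a.s.,}
\]
so on the logarithmic scale $a_n := \log_2(1/Z_n)$ the process satisfies $D_{B_{n+1}}\, a_n - \ell \leq a_{n+1} \leq D_{B_{n+1}}\, a_n$. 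Setting $T_n := \sum_{k=1}^n \log_\ell D_{B_k}$, the SLLN gives $T_n/n \to \mathbb{E}[\log_\ell D_{B_1}] = \frac{1}{\ell}\sum_{i=1}^\ell \log_\ell D_i = \exponent(G)$ almost surely. In particular, the resulting value is channel-independent, as promised before the theorem statement.

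For part (ii) of Definition~\ref{def:rate}, I would iterate the upper bound $a_{n+1} \leq D_{B_{n+1}} a_n$ to obtain $a_n \leq \ell^{T_n} a_0$. Fixing $\beta > \exponent(G)$ and choosing $\delta$ with $\exponent(G)+\delta < \beta$, the SLLN ensures $\Pr[T_n \leq n(\exponent(G)+\delta)] \to 1$, and on this event $a_n \leq \ell^{n\beta}$ for $n$ large, i.e.\ $Z_n \geq 2^{-\ell^{n\beta}}$. This gives $\liminf_n \Pr[Z_n \geq 2^{-\ell^{n\beta}}] = 1$.

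For part (i), I would iterate the lower bound $a_{n+1} \geq D_{B_{n+1}} a_n - \ell$. The additive $-\ell$ prevents a direct multiplicative iteration, so following \cite{ArT08} I would introduce the stopping time $\tau_\epsilon := \inf\{n : Z_n \leq \epsilon\}$ for a small fixed $\epsilon$. By Theorem~\ref{thm:main}, $Z_n \to Z_\infty \in \{0,1\}$ a.s.\ with $\Pr[Z_\infty=0]=I(W)$, so on $\{Z_\infty=0\}$ we have $\tau_\epsilon<\infty$ a.s. Once $a_n \geq \log_2(1/\epsilon)$ is much larger than $\ell$, the additive $-\ell$ is absorbed into an arbitrarily small multiplicative penalty, yielding $a_{n+1} \geq (1-\gamma(\epsilon))\, D_{B_{n+1}}\, a_n$ with $\gamma(\epsilon)\to 0$ as $\epsilon\to 0$; here one uses that $G$ being polarizing forces some $D_i\geq 2$ (by Lemma~\ref{lem:square}), so good steps occur with positive frequency and keep $a_n$ from drifting downward. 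Iterating from $\tau_\epsilon$,
\[
a_n \;\geq\; (1-\gamma(\epsilon))^{n-\tau_\epsilon}\, a_{\tau_\epsilon}\, \ell^{T_n - T_{\tau_\epsilon}}.
\]
For any $\beta < \exponent(G)$, choose $\epsilon$ small and $\delta > 0$ so that $\log_\ell(1-\gamma(\epsilon)) + \exponent(G) - \delta > \beta$, and apply the SLLN from the random time $\tau_\epsilon$ to conclude that on $\{Z_\infty = 0\}$, eventually $a_n \geq \ell^{n\beta}$. This gives $\liminf_n \Pr[Z_n \leq 2^{-\ell^{n\beta}}] \geq I(W)$; the matching upper bound is immediate because on $\{Z_\infty = 1\}$ we have $Z_n \to 1$ and hence $Z_n > 2^{-\ell^{n\beta}}$ for $n$ large.

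The main obstacle is the additive $\ell$ in the recursion for $a_n$: it comes from the multiplicative $2^{\ell-i}$ factor in Lemma~\ref{lem:z-bound} and can potentially dominate the desired multiplicative growth $D_{B_{n+1}} a_n$ while $a_n$ is still small. The stopping-time device confines this transient to a random but almost-surely finite initial segment on $\{Z_\infty = 0\}$, after which the recursion is driven essentially by the product of the $D_{B_k}$'s and the SLLN kicks in cleanly, matching the upper and lower bounds at $\exponent(G)$.
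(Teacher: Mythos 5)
Your route is the same as the paper's (which itself only sketches this proof and defers the details to \cite{ArT08}): apply Lemma~\ref{lem:z-bound} along the tree process to get $Z_n^{D_{B_{n+1}}}\leq Z_{n+1}\leq 2^{\ell}Z_n^{D_{B_{n+1}}}$, use the law of large numbers for $\frac1n\sum_k\log_\ell D_{B_k}$, and handle the constant factor by an Ar\i kan--Telatar style argument. Your proof of part (ii) of Definition~\ref{def:rate} is correct as written (one only needs $0<Z_0<1$, which follows from $0<I(W)<1$), and it coincides with the paper's argument up to the change of variables $a_n=\log_2(1/Z_n)$.

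The gap is in part (i), exactly at the point the paper compresses into ``the constant can be taken care of using the bootstrapping argument of \cite{ArT08}.'' The inequality $a_{k+1}\geq(1-\gamma(\epsilon))D_{B_{k+1}}a_k$ is valid only while $a_k\geq\log_2(1/\epsilon)$, and the stopping time $\tau_\epsilon$ guarantees this only at the single time $k=\tau_\epsilon$: every step with $D_{B_{k+1}}=1$ may decrease $a_k$ by $\ell$, and such steps occur with probability at least $1/\ell$ (indeed $D_1=1$ for every invertible $G$, since the coset $g_1+\langle g_2,\dotsc,g_\ell\rangle$ always contains a weight-one vector). Hence, for fixed $\epsilon$, with probability bounded away from zero (uniformly in $n$) a run of such steps immediately after $\tau_\epsilon$ pushes $a_k$ back below the threshold, so the iterated bound $a_n\geq(1-\gamma(\epsilon))^{n-\tau_\epsilon}a_{\tau_\epsilon}\ell^{T_n-T_{\tau_\epsilon}}$ is not justified; the remark that ``good steps occur with positive frequency'' is not a proof that the trajectory never revisits the region where the additive $\ell$ dominates. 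The standard repair, which is what the bootstrapping in \cite{ArT08} actually supplies, is to replace the hitting time by an event on which smallness holds \emph{uniformly in time}: since $Z_n\to Z_\infty\in\{0,1\}$ a.s.\ with $\Pr[Z_\infty=0]=I(W)$ (Theorem~\ref{thm:main}), for every $\delta>0$ and every $\zeta>0$ there is a deterministic $m$ with $\Pr[Z_n\leq\zeta\ \text{for all } n\geq m]\geq I(W)-\delta$; on this event $a_k\geq\log_2(1/\zeta)$ for all $k\geq m$, the multiplicative penalty $\gamma(\zeta)$ can be made as small as desired, and your law-of-large-numbers step (run from the deterministic time $m$, or in the two-stage form of \cite{ArT08}: first $Z_n\leq 2^{-cn}$ with probability at least $I(W)-\delta$, then restart) goes through and yields $\liminf_n\Pr[Z_n\leq 2^{-\ell^{n\beta}}]\geq I(W)-2\delta$ for every $\beta<\exponent(G)$. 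With that substitution your argument matches the paper's intended proof.
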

\begin{proof}
The proof is similar to that of \cite[Theorem 3]{ArT08}. We highlight the main idea and omit the details.

First note that by Lemma~\ref{lem:z-bound} we have $Z_j \geq Z_{j-1}^{D_{B_j}}$. Let $m_i= \lvert \{1\leq j \leq n: B_j= i\} \rvert$. We then obtain
\begin{align}\label{equ:exponentlowerbound}
Z_n \geq Z^{{\prod_i} D_i^{m_i}} = Z^{\ell^{(\sum_i m_i \log_\ell D_i)}}.
\end{align}
The exponent of $Z$ on the right-hand side of (\ref{equ:exponentlowerbound}) can be rewritten as
\begin{align*}
\ell^{\sum_i m_i \log_\ell D_i} = (\ell^n)^{\sum_i \frac{m_i}{n} \log_\ell D_i}.
\end{align*}
By the law of large numbers, for any $\epsilon > 0$,  
\begin{align*}
\left\vert \frac{m_i}{n}- \frac1\ell\right\vert \leq \epsilon \;\;\; 
\end{align*}
with high probability for $n$ sufficiently large. 
%which implies that the exponent of $Z$ goes to $\frac{1}\ell \sum_i\log_\ell D_i$ with high probability. 
This proves part $(ii)$ of the definition of $\exponent(G)$, i.e., for any
$\beta > \frac{1}{\ell} \sum_i \log_\ell D_i$,
\begin{align*}
\lim_{n\to\infty}\Pr[Z_n \geq 2^{-\ell^{n\beta}}] = 1.
\end{align*} 
The proof for part $(i)$ of the definition follows using similar arguments as above, and by noting that $Z_j\leq 2^{\ell - B_j} Z_{j-1}^{D_{B_j}}$. The constant $2^{\ell - B_j}$ can be taken care of using the `bootstrapping' argument of \cite{ArT08}.
%for any $\beta < \frac{1}{\ell} \sum_i \log_\ell D_i$,
%\begin{align*}
%\lim_{n\to\infty}\Pr[Z_n \leq 2^{-N^\beta}] = I(W).
%\end{align*} 
\end{proof}

\begin{example}
For the matrix $F$ considered in Example \ref{ex:partial-d}, we have
\begin{align*}
\exponent(F) = \frac13 (\log_3 1 + \log_3 1 + \log_3 3) = \frac13.	
\end{align*}
\end{example}

\section{Bounds on the Exponent}
For the matrix $G_2$, we have $\exponent(G_2) =
\frac12$. Note that for the case of $2\times 2$ matrices, the only polarizing
matrix is $G_2$. In order to address the question of
whether the rate of polarization can be improved by considering large matrices,
we define
\begin{align}
\exponent_{\ell} \triangleq \max_{G\in \{0,1\}^{\ell\times\ell}} \exponent(G).\label{eqn:max-exp}
\end{align}
Theorem \ref{thm:l-ary} facilitates the computation of $\exponent_\ell$ 
by providing an expression for $\exponent(G)$ in terms of the partial distances of $G$. Lemmas \ref{lem:mindist} and \ref{lem:monotonic} below provide further simplification for computing (\ref{eqn:max-exp}).

\begin{lemma}[Gilbert-Varshamov Inequality for Linear Codes]\label{lem:mindist}
Let $\code{}$ be a binary linear code of length $\ell$ and $\dmin(\code{}) = d_1$. Let
$g \in \{0,1\}^\ell$ and let $\dh(g,\code{}) = d_2$. Let
$\code{'}$ be the linear code obtained by adding the vector $g$ to $\code{}$,
i.e., $\code{'} = \langle g, \code{}\rangle$. Then $\dmin(\code{'}) = \min\{d_1,d_2\}$.
\end{lemma}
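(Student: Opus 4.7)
The plan is to prove both inequalities $\dmin(\code{'}) \geq \min\{d_1,d_2\}$ and $\dmin(\code{'}) \leq \min\{d_1,d_2\}$ by exploiting the fact that every codeword of $\code{'} = \langle g, \code{}\rangle$ either lies in $\code{}$ itself or is of the form $g + c$ for some $c \in \code{}$. This is a straightforward decomposition since $g \notin \code{}$ (as $d_2 \geq 1$), so $\code{'}$ is the disjoint union of the two cosets $\code{}$ and $g + \code{}$.

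For the lower bound $\dmin(\code{'}) \geq \min\{d_1,d_2\}$, I would consider an arbitrary nonzero $c' \in \code{'}$ and split into the two cases. If $c' \in \code{}$, then by definition of minimum distance, the Hamming weight $\dh(c',0) \geq d_1$. If $c' = g + c$ for some $c \in \code{}$, then $\dh(c',0) = \dh(g+c, 0) = \dh(g,c) \geq \dh(g,\code{}) = d_2$, using the translation invariance of Hamming distance. Either way, $\dh(c',0) \geq \min\{d_1,d_2\}$.

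For the upper bound $\dmin(\code{'}) \leq \min\{d_1,d_2\}$, I would exhibit two explicit codewords of $\code{'}$ achieving weights $d_1$ and $d_2$ respectively. First, by definition there exists $c_1 \in \code{}$ of weight exactly $d_1$, and $c_1 \in \code{'}$ as well. Second, by definition of $\dh(g,\code{}) = d_2$ there exists $c_2 \in \code{}$ with $\dh(g,c_2) = d_2$, and then $g + c_2 \in \code{'}$ has weight $d_2$. Taking the minimum gives $\dmin(\code{'}) \leq \min\{d_1,d_2\}$.

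There is no substantive obstacle here; the only minor issue is handling the degenerate case where $\code{}$ is trivial (only the zero codeword), in which case the ``minimum distance'' $d_1$ is conventionally $\infty$ and the claim reduces to $\dmin(\code{'}) = d_2 = \dh(g,0)$, which is immediate since $\code{'} = \{0, g\}$. Combining the two bounds gives $\dmin(\code{'}) = \min\{d_1,d_2\}$ as claimed.
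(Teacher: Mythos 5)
Your proof is correct and takes essentially the same route as the paper's: both rest on writing every codeword of $\code{'}$ as $c+\alpha g$ with $c\in\code{}$, $\alpha\in\{0,1\}$, and minimizing the Hamming weight over the two cases, which yields $\min\{d_1,d_2\}$. The paper merely compresses your two-sided (lower/upper bound) argument into a single chain of equalities, so there is no substantive difference.
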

\begin{proof}
Since $\code{'}$ is a linear code, its codewords are of the form $c+\alpha g$ where $c\in \code{}, \alpha \in \{0,1\}$. Therefore
\begin{align*}
\dmin(\code{'}) 
&= \min_{c\in\code{}}\{\min\{\dh(0,c),\dh(0,c+g)\}\}\\
& =\min\{\min_{c\in\code{}}\{\dh(0,c)\},\min_{c\in\code{}}\{\dh(g,c)\}\}\\
&=\min\{d_1,d_2\}.
\end{align*}
\end{proof}

\begin{corollary}\label{cor:mindist}
Given a set of vectors $g_1,\dots,g_k$ with partial distances $D_j =
\dh(g_j,\langle g_{j+1},\dots,g_k\rangle)$, $j=1,\dotsc,k$,  
the minimum distance of the linear code $\langle g_1,\dots, g_k\rangle$ is given by
$\min_{j=1}^\ell\{D_j\}$.
\end{corollary}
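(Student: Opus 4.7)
The plan is to derive Corollary~\ref{cor:mindist} as a direct consequence of Lemma~\ref{lem:mindist} by induction on the number of generators, starting with the innermost code $\langle g_k\rangle$ and adding one generator at a time in the order $g_{k-1}, g_{k-2}, \dotsc, g_1$.

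For the base case, the code $\langle g_k\rangle$ has exactly two codewords, $0$ and $g_k$, so its minimum distance is $\dh(g_k, 0)$. By convention in Definition~\ref{def:partial-d}, this is precisely $D_k$, matching $\min_{j=k}^{k} D_j$. For the inductive step, assume that $\dmin(\langle g_{j+1},\dots,g_k\rangle) = \min_{i=j+1}^k D_i$. Apply Lemma~\ref{lem:mindist} with $\code{} = \langle g_{j+1},\dots,g_k\rangle$ and $g = g_j$, so that $\code{'} = \langle g_j, g_{j+1},\dots,g_k\rangle$. The lemma yields
\begin{align*}
\dmin(\code{'}) = \min\{\dmin(\code{}),\, \dh(g_j,\code{})\} = \min\Bigl\{\min_{i=j+1}^k D_i,\, D_j\Bigr\} = \min_{i=j}^k D_i,
\end{align*}
where we used the inductive hypothesis and the definition $D_j = \dh(g_j,\langle g_{j+1},\dots,g_k\rangle)$. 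Iterating down to $j=1$ completes the argument.

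There is essentially no obstacle here: Lemma~\ref{lem:mindist} already encapsulates the only non-trivial observation (that adding a single generator to a code produces a new minimum distance equal to the smaller of the old minimum distance and the distance from the new generator to the old code), and the corollary is merely its iterated form. The only thing to be careful about is matching the indexing convention for $D_k$ (which is the Hamming weight of $g_k$, not a distance to a non-trivial code) with the base case of the induction.
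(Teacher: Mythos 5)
Your proof is correct and matches the paper's intent: the corollary is stated there without proof precisely because it is the iterated application of Lemma~\ref{lem:mindist}, which is exactly the induction you carry out (with the right base case $D_k = \dh(g_k,0)$). Note only that the statement's $\min_{j=1}^\ell$ is a typo for $\min_{j=1}^k$, which your argument handles correctly.
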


%The problem of finding the best exponent $\exponent_\ell$ for an $\ell
%\times \ell$ matrix can be framed as a maximization problem \eqref{eqn:max-exp}.
The maximization problem in \eqref{eqn:max-exp} is not feasible in practice even for $\ell \geq 10$.
The following lemma allows to restrict this maximization to a smaller set of 
matrices. Even though the maximization problem still remains intractable, by
working on this restricted set, we obtain lower and upper bounds on $\exponent_\ell$.
\begin{lemma}[Partial Distances Should Decrease]\label{lem:monotonic}
Let $G = [g_1^T \dots g_\ell^T]^T$. Fix $k \in \{1,\dotsc,\ell\}$ and let $G' = [g_1^T \dots g_{k+1}^T g_k^T \dots g_\ell^T]^T$ be the matrix obtained from $G$ by swapping $g_k$ and $g_{k+1}$. Let $\{D_i\}_{i=1}^\ell$ and $\{D'_i\}_{i=1}^\ell$ denote the partial distances of $G$ and $G'$ respectively. If $D_k > D_{k+1}$, then
\begin{enumerate}
\item[(i)]  $\exponent(G') \geq \exponent(G)$,
\item[(ii)] $D'_{k+1} > D'_k$.
\end{enumerate}
\end{lemma}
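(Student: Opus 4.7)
The plan is to reduce the comparison of $G$ and $G'$ to the three quantities
\[
D_{k+1} = \dh(g_{k+1}, V), \qquad D'_{k+1} = \dh(g_k, V), \qquad D^* = \dh(g_k + g_{k+1}, V),
\]
where $V = \langle g_{k+2}, \dots, g_\ell\rangle$. First I would observe that $D'_i = D_i$ for all $i \notin \{k,k+1\}$: indeed, $D'_i$ for $i < k$ depends only on the span $\langle g_k, g_{k+1}, g_{k+2},\dots,g_\ell\rangle$ which is invariant under swapping $g_k$ and $g_{k+1}$, while for $i > k+1$ the relevant rows are untouched. Consequently, proving (i) amounts to showing $D'_k D'_{k+1} \geq D_k D_{k+1}$, and (ii) is a statement only about the perturbed indices.

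Next I would decompose the two affected partial distances by splitting on the coefficient of $g_k$ (resp.\ $g_{k+1}$) in the code $\langle g_k, g_{k+2},\dots,g_\ell\rangle$ (resp.\ $\langle g_{k+1},g_{k+2},\dots,g_\ell\rangle$). This gives
\begin{align*}
D_k      &= \min\{D'_{k+1},\, D^*\}, \\
D'_k     &= \min\{D_{k+1},\, D^*\}.
\end{align*}
These identities are the technical core of the argument, and isolating $D^*$ as a quantity that appears symmetrically in both formulas is the main insight; everything afterwards is bookkeeping.

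With these two formulas in hand, the hypothesis $D_k > D_{k+1}$ forces $D^* > D_{k+1}$ (otherwise $D_k \leq D^* \leq D_{k+1}$, a contradiction), and therefore $D'_k = D_{k+1}$. The hypothesis also gives $D_k \leq D'_{k+1}$ directly from the first identity, so $D'_{k+1} \geq D_k > D_{k+1} = D'_k$, which is (ii).

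For (i), I would split on which term realizes the minimum in $D_k = \min\{D'_{k+1}, D^*\}$. If $D_k = D'_{k+1}$, then $D'_k D'_{k+1} = D_{k+1} D_k$, so the product is preserved. If $D_k = D^*$, then $D'_{k+1} > D^* = D_k$, hence $D'_k D'_{k+1} = D_{k+1}\,D'_{k+1} > D_{k+1} D_k$. Either way the product of partial distances does not decrease, so by Theorem~\ref{thm:l-ary} we have $\exponent(G') \geq \exponent(G)$, proving (i). The most delicate step is establishing the two min-formulas cleanly; once they are in place, both conclusions drop out with minimal additional work.
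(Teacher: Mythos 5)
Your proposal is correct and takes essentially the same route as the paper: your two min-formulas are exactly the paper's splitting of the minimization over the cosets $c$ and $c+g$ (your $D^*$ is the paper's term $\min_{c\in\langle g_{k+2},\dots,g_\ell\rangle}\dh(g_k,c+g_{k+1})$), and both arguments reduce to $D'_k = D_{k+1}$ together with $D'_{k+1}\geq D_k$ before comparing products via Theorem~\ref{thm:l-ary}. The only cosmetic difference is that your final case split in (i) is unnecessary, since $D'_kD'_{k+1}=D_{k+1}D'_{k+1}\geq D_{k+1}D_k$ follows immediately from those two facts.
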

\begin{proof}
Note first that $D_i = D'_i$ if $i\notin \{k,k+1\}$. Therefore, to prove the first claim, 
it suffices to show that $D'_kD'_{k+1} \geq D_kD_{k+1}$. To that end, write
\begin{align*}
D'_k &= \dh(g_{k+1},\langle g_k,g_{k+2},\dots,g_{\ell}\rangle),\\
D_k &= \dh(g_k,\langle g_{k+1},\dots,g_{\ell}\rangle),\\
D'_{k+1} &= \dh(g_{k},\langle g_{k+2},\dots,g_{\ell}\rangle),\\
D_{k+1} &= \dh(g_{k+1},\langle g_{k+2},\dots,g_{\ell}\rangle),
\end{align*}
and observe that $D'_{k+1} \geq D_k$ since $\langle g_{k+2},\dots,g_\ell\rangle$ is a sub-code of $\langle g_{k+1},\dots,g_\ell\rangle$. $D'_k$ can be computed as
\begin{align*}
%D'_k & =
&\min\left\{\min_{c\in \langle g_{k+2},\dots,g_\ell\rangle}\dh(g_{k+1},c), \min_{c\in \langle g_{k+2},\dots,g_\ell\rangle} \dh(g_{k+1},c+g_k)\right\}\\
&= \min\{D_{k+1}, \min_{c\in \langle g_{k+2},\dots,g_\ell\rangle}
\dh(g_{k},c+g_{k+1})\}\\
&= D_{k+1},
\end{align*}
where the last equality follows from 
\begin{align*}
\min_{c\in \langle g_{k+2},\dots,g_\ell\rangle}\dh(g_{k},c+g_{k+1})&\geq
\min_{c\in \langle g_{k+1},g_{k+2},\dots,g_\ell\rangle}\dh(g_{k},c) \\
&= D_k > D_{k+1}.
\end{align*}
Therefore, $D'_{k}D'_{k+1} \geq D_k D_{k+1}$, which proves the first claim. The second claim follows from  the inequality $D'_{k+1} \geq D_k > D_{k+1} = D'_k$.
\end{proof}

\begin{corollary}\label{cor:monotonic}
In the definition of $\exponent_\ell$ \eqref{eqn:max-exp}, the maximization can be restricted to
the matrices $G$ which satisfy $D_1\leq D_2 \leq \dotsc \leq D_\ell$.
\end{corollary}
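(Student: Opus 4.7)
My plan is to apply Lemma~\ref{lem:monotonic} repeatedly as a bubble-sort-style rewriting rule. Starting from any matrix $G$ whose partial distances $D_1,\dots,D_\ell$ are not in non-decreasing order, there must exist an index $k$ with $D_k > D_{k+1}$; swapping rows $k$ and $k+1$ produces a matrix $G'$ with $\exponent(G') \geq \exponent(G)$ by part (i) of the lemma, while by part (ii) the new partial distances satisfy $D'_k < D'_{k+1}$, so this particular adjacent inversion is destroyed. Iterating, I would produce a sequence of matrices $G = G^{(0)}, G^{(1)}, G^{(2)}, \dots$ with non-decreasing exponents, and aim to show that the process terminates at a matrix with $D_1 \leq D_2 \leq \dots \leq D_\ell$.

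The main obstacle is verifying termination, since a swap does not merely permute the multiset of partial distances: the computation inside the proof of Lemma~\ref{lem:monotonic} gives $D'_k = D_{k+1}$ but only $D'_{k+1} \geq D_k$, so the value at position $k+1$ can strictly increase. To handle this I would use a two-tier measure of progress. First, the quantity $S(G) := \sum_{i=1}^\ell D_i$ is a non-negative integer bounded above by $\ell^2$ and changes by $D'_{k+1} - D_k \geq 0$ at each swap, so it can strictly increase only finitely many times. Second, whenever $S$ is preserved by a swap one has $D'_{k+1} = D_k$ and $D'_k = D_{k+1}$, so the swap is a genuine exchange of two adjacent values, i.e., a single bubble-sort step on the sequence $(D_1,\dots,D_\ell)$, which strictly decreases the number of inversions. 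Consequently, at most $\binom{\ell}{2}$ consecutive $S$-preserving swaps can occur before either $S$ strictly increases or no inversion remains, which forces termination in finitely many steps.

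Applied to any maximizer $G^{\ast}$ of \eqref{eqn:max-exp}, the procedure outputs a matrix whose partial distances are non-decreasing and whose exponent is at least $\exponent(G^{\ast}) = \exponent_\ell$, establishing that the maximum in \eqref{eqn:max-exp} is attained on the restricted set of sorted matrices.
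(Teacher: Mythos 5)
Your proof is correct and follows the same route the paper intends: the corollary is stated as an immediate consequence of Lemma~\ref{lem:monotonic}, obtained by repeatedly swapping adjacent rows with $D_k > D_{k+1}$. Your explicit termination argument (the lexicographic measure combining $\sum_i D_i$ and the inversion count) carefully fills in a detail the paper leaves implicit, since $D'_{k+1}$ may strictly exceed $D_k$ and so the swaps are not plain bubble-sort transpositions.
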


\subsection{Lower Bound}
The following lemma provides a lower bound on $\exponent_\ell$ by using a 
Gilbert-Varshamov type construction.
\begin{lemma}[Gilbert-Varshamov Bound]\label{lem:lowerbound}
\begin{align*}
\exponent_\ell \geq \frac{1}{\ell} \sum_{i=1}^\ell \log_\ell \tilde{D}_i
\end{align*}
where 
\begin{align}\label{eqn:lower-dist}
\tilde{D}_i = \max \left\{D: \sum_{j=0}^{D-1} \binom{\ell}{j} < 2^i\right\}.
\end{align}
%\begin{align*}
%\exponent_\ell \geq \max_{\{D_i\}: \sum_{j=0}^{D_i-1} {\ell \choose j} < 2^{i} }\frac1\ell \sum_{j=1}^\ell \log _\ell D_j.
%\end{align*}
\end{lemma}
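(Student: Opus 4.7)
The plan is to exhibit a single matrix $G$ whose partial distances satisfy $D_i \geq \tilde{D}_i$ for every $i \in \{1,\dots,\ell\}$; Theorem~\ref{thm:l-ary} then immediately gives $\exponent_\ell \geq \exponent(G) \geq \frac{1}{\ell}\sum_{i=1}^\ell \log_\ell \tilde{D}_i$. The construction is a Gilbert-Varshamov type greedy argument, which matches the defining inequality of $\tilde{D}_i$ almost verbatim.

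I would build the rows $g_\ell, g_{\ell-1}, \dots, g_1$ in that reverse order, so that at the step where $g_i$ is being chosen the subcode $C_i := \langle g_{i+1}, \dots, g_\ell\rangle$ is already fixed and $D_i = \dh(g_i, C_i)$ is precisely the quantity under control. Assume inductively that $g_{i+1}, \dots, g_\ell$ have been chosen and are linearly independent, so that $|C_i| = 2^{\ell-i}$. I pick $g_i$ outside the union of Hamming balls of radius $\tilde{D}_i - 1$ centered at the codewords of $C_i$. That union has size at most
\[
|C_i|\cdot\sum_{j=0}^{\tilde{D}_i-1}\binom{\ell}{j}
\;\leq\; 2^{\ell-i}\sum_{j=0}^{\tilde{D}_i-1}\binom{\ell}{j}
\;<\; 2^{\ell-i}\cdot 2^i \;=\; 2^{\ell},
\]
where the strict inequality is exactly the defining condition of $\tilde{D}_i$ in \eqref{eqn:lower-dist}. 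Hence some $g_i \in \{0,1\}^\ell$ at Hamming distance $\geq \tilde{D}_i$ from $C_i$ exists, giving $D_i \geq \tilde{D}_i$. Because $\tilde{D}_i \geq 1$ one has $g_i \notin C_i$, so $g_i, g_{i+1}, \dots, g_\ell$ remain linearly independent and the inductive hypothesis propagates.

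The base case $i = \ell$ reduces to choosing $g_\ell$ of Hamming weight at least $\tilde{D}_\ell$, which is possible by the same counting applied to the single codeword $0$. In fact $\tilde{D}_\ell = \ell$ since $\sum_{j=0}^{\ell-1}\binom{\ell}{j}=2^\ell-1 < 2^\ell$, forcing the choice $g_\ell = (1,\dots,1)$; this has the pleasant side-effect that no column permutation of $G$ can be upper triangular, so by Lemma~\ref{lem:square} the resulting invertible $G$ is polarizing and its exponent is meaningful. There is no genuine obstacle in this proof---the entire content lies in recognizing that the condition $\sum_{j=0}^{D-1}\binom{\ell}{j} < 2^i$ is exactly the Gilbert-Varshamov threshold for dodging the radius-$(D-1)$ balls around the $2^{\ell-i}$ codewords of $C_i$; the only mild bookkeeping point is to proceed bottom-up, so that the relevant subcode is determined before each row is selected.
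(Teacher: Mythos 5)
Your proof is correct and follows essentially the same route as the paper: a greedy Gilbert--Varshamov construction of the rows from the bottom up, choosing $g_i$ outside the union of radius-$(\tilde{D}_i-1)$ Hamming balls around the $2^{\ell-i}$ codewords of $\langle g_{i+1},\dots,g_\ell\rangle$, with the union bound $2^{\ell-i}\sum_{j=0}^{\tilde{D}_i-1}\binom{\ell}{j}<2^\ell$ guaranteed by \eqref{eqn:lower-dist}. Your added bookkeeping (linear independence of the rows, the base case $g_\ell=(1,\dots,1)$, and invoking Theorem~\ref{thm:l-ary}) only makes explicit what the paper leaves implicit.
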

\begin{proof}
We will construct a matrix $G = [g_1^T,\dotsc,g_\ell^T]^T$, 
with partial distances $D_i = \tilde{D}_i$: Let $S(c,d)$ denote the set of binary vectors with Hamming distance at most $d$ from $c\in\{0,1\}^\ell$, i.e., 
\begin{align*}
S(c,d)= \{x\in \{0,1\}^{\ell} : \dh(x,c) \leq d\}.
\end{align*}
To construct the $i^{th}$ row of $G$ with partial distance $\tilde{D}_i$, we will find a $v\in\{0,1\}^\ell$ satisfying $\dh(v,\langle g_{i+1},\dotsc,g_\ell\rangle)=\tilde{D}_i$ and set $g_i=v$. Such a $v$ satisfies $v\notin S(c,\tilde{D}_i-1)$ for all $c \in \langle g_{i+1},\dotsc,g_\ell\rangle$ and exists if the sets $S(c,\tilde{D}_i-1)$, $c\in \langle g_{i+1},\dotsc,g_\ell\rangle$ do not cover $\{0,1\}^\ell$. The latter condition is satisfied if
\begin{align*}
\vert \cup_{c \in \langle g_{i+1},\dotsc,g_\ell\rangle} S(c,\tilde{D}_i-1)\vert  & \leq \sum_{c\in \langle g_{i+1},\dotsc,g_\ell\rangle}
\vert S(c,\tilde{D}_i -1)\vert \\
& = 2^{\ell-i} \sum_{j=0}^{\tilde{D}_i-1} \binom{\ell}{j} < 2^\ell,
\end{align*}
which is guaranteed by (\ref{eqn:lower-dist}).
\end{proof}
\begin{figure}[htp]
\centering
\input{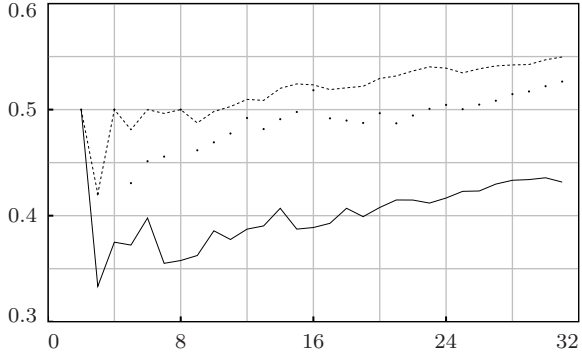}
\caption{The solid curve shows the lower bound on $\exponent_\ell$
as described by Lemma~\ref{lem:lowerbound}. The dashed curve corresponds to
the upper bound on $\exponent_\ell$ according to Lemma~\ref{lem:strongerupperbound}.
The points show the performance of the best matrices 
obtained by the procedure described in
Section~\ref{sec:construction}.}\label{fig:bound}
\end{figure}
The solid line in Figure \ref{fig:bound} shows the lower bound of Lemma \ref{lem:lowerbound} . The
bound exceeds $\frac12$ for $\ell = 85$, suggesting that the exponent can be
improved by considering large matrices. In fact, the lower bound tends to
$1$ when $\ell$ tends to infinity:
\begin{lemma}[Exponent $1$ is Achievable] \label{lem:exponent1}
$\lim_{\ell \to \infty} \exponent_\ell = 1$.
\end{lemma}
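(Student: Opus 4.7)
The plan is to sandwich $\exponent_\ell$ between $1$ and a Gilbert--Varshamov lower bound that tends to $1$. The upper bound is immediate from Theorem~\ref{thm:l-ary}: every partial distance $D_i$ of an $\ell \times \ell$ binary matrix satisfies $D_i \leq \ell$, hence $\exponent(G) = \frac{1}{\ell}\sum_{i=1}^\ell \log_\ell D_i \leq 1$ for every polarizing $G$, so that $\exponent_\ell \leq 1$.

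For the matching lower bound I would combine Lemma~\ref{lem:lowerbound} with the classical volume estimate $\sum_{j=0}^{d}\binom{\ell}{j} \leq 2^{\ell h_2(d/\ell)}$, valid for $d \leq \ell/2$, where $h_2$ denotes the binary entropy function. Fix $\delta \in (0,\tfrac{1}{2})$ and take $d = \lfloor \delta \ell \rfloor$. The defining condition~(\ref{eqn:lower-dist}) and the volume estimate together imply $\tilde D_i \geq \lfloor \delta \ell \rfloor$ for every index $i > \ell h_2(\delta)$, and therefore $\log_\ell \tilde D_i \geq \log_\ell \lfloor \delta \ell \rfloor = 1 + o(1)$ as $\ell \to \infty$. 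For the remaining indices I would only use the trivial bound $\log_\ell \tilde D_i \geq 0$, which follows from $\tilde D_i \geq 1$.

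Plugging these into Lemma~\ref{lem:lowerbound} then yields
\begin{align*}
\exponent_\ell \;\geq\; \frac{1}{\ell}\sum_{i > \ell h_2(\delta)} \log_\ell \tilde D_i \;\geq\; \bigl(1 - h_2(\delta)\bigr)\bigl(1 + o(1)\bigr),
\end{align*}
and sending first $\ell \to \infty$ with $\delta$ fixed and then $\delta \to 0$ (using continuity of $h_2$ at $0$) gives $\liminf_{\ell \to \infty} \exponent_\ell \geq 1$, closing the sandwich.

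I expect no real obstacle beyond bookkeeping: the only care needed is that the volume bound is applied inside the regime $d \leq \ell/2$ where it is sharp, and that the ``bad'' indices $i \leq \ell h_2(\delta)$ --- on which $\tilde D_i$ may be as small as $1$ --- form at most a fraction $h_2(\delta)$ of all rows, which is made arbitrarily small in the final limit $\delta \to 0$.
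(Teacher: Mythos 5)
Your proposal is correct and takes essentially the same route as the paper: both prove $\exponent_\ell \le 1$ trivially from $D_i \le \ell$ and get the matching lower bound by feeding a Gilbert--Varshamov/entropy estimate of the binomial sum into Lemma~\ref{lem:lowerbound}, discarding a fraction of rows whose $\tilde D_i$ may be small and letting that fraction vanish at the end. The only difference is a dual parametrization --- you fix the relative distance $\delta$ and discard the fraction $h_2(\delta)$ of indices using the explicit volume bound $\sum_{j\le d}\binom{\ell}{j}\le 2^{\ell h_2(d/\ell)}$, whereas the paper fixes the discarded fraction $\alpha$ and invokes the asymptotic fact $\tilde D_{\lceil\alpha\ell\rceil}\gtrsim \ell h^{-1}(\alpha)$ together with monotonicity of the $\tilde D_i$ --- which, if anything, makes your version slightly more self-contained.
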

\begin{proof}
Fix $\alpha\in(0,\frac 12)$. Let $\{\tilde{D}_j\}$ be defined as in Lemma
\ref{lem:lowerbound}. It is known (cite something here) that
$\tilde{D}_{\lceil\alpha\ell \rceil}$ in (\ref{eqn:lower-dist}) satisfies
$\lim_{\ell\to\infty} \tilde D_{\lceil\alpha \ell\rceil} \geq \ell h^{-1}(\alpha)$,
where $h(\cdot)$ is the binary entropy function. Therefore, there exists an
$\ell_0(\alpha)<\infty$ such that for all $\ell \geq \ell_0(\alpha)$ we have $D_{\lceil\alpha \ell\rceil} \geq \frac 12\ell h^{-1}(\alpha)$. Hence, for $\ell\geq\ell_0(\alpha)$ we can write
\begin{align*}
\exponent_\ell &\geq \frac{1}{\ell}\sum_{i=\lceil\alpha \ell\rceil}^\ell \log_{\ell}\tilde{D}_i\\
&\geq \frac{1}{\ell}(1-\alpha) \ell \log_{\ell}\tilde{D}_{\lceil\alpha\ell\rceil}\\
&\geq \frac{1}{\ell}(1-\alpha) \ell \log_{\ell}\frac{\ell h^{-1}(\alpha)}{2}\\
& = 1-\alpha + (1-\alpha)\log_{\ell}\frac{h^{-1}(\alpha)}{2},
\end{align*}
where the first inequality follows from Lemma \ref{lem:lowerbound}, and the second inequality follows from the fact that $\tilde{D}_i \leq \tilde{D}_{i+1}$ for all $i$. Therefore we obtain
\begin{align}\label{eqn:liminf}
\liminf_{\ell\to\infty} \exponent_\ell \geq 1-\alpha \quad \forall \alpha\in (0,\frac 12).
\end{align}
Also, since $\tilde{D}_i \leq \ell$ for all $i$, we have $\exponent_\ell \leq 1$ for all $\ell$. Hence,
\begin{align}\label{eqn:limsup}
\limsup_{\ell\to\infty} \exponent_\ell \leq 1.
\end{align}
Combining (\ref{eqn:liminf}) and (\ref{eqn:limsup}) concludes the proof.
\end{proof}

\subsection{Upper Bound}\label{sub:upperbound}
Corollary~\ref{cor:monotonic} says that for any $\ell$, there exists a matrix with
$D_1\leq\dots\leq D_\ell$ that achieves the exponent $\exponent_\ell$.
Therefore, to obtain upper bounds on $\exponent_\ell$, it suffices to bound
the exponent achievable by this restricted class of matrices. The partial
distances of these matrices can be bounded easily as shown in the following
lemma.
\begin{lemma}[Upper Bound on Exponent]\label{lem:generalbound}
Let $d(n,k)$ denote the largest possible minimum distance of a binary code of
length $n$ and dimension $k$. Then, 
\begin{align*}
\exponent_\ell \leq \frac 1\ell \sum_{i=1}^\ell \log_{\ell}d(\ell,\ell - i+1).
\end{align*}
\end{lemma}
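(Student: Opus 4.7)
The plan is to combine Corollary~\ref{cor:monotonic} (which restricts the maximization to matrices with nondecreasing partial distances) with Corollary~\ref{cor:mindist} (which identifies the minimum distance of a nested subcode with the minimum partial distance) to bound each individual $D_i$ by $d(\ell,\ell-i+1)$, and then invoke the formula of Theorem~\ref{thm:l-ary}.

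First, I would note that by Corollary~\ref{cor:monotonic}, the supremum defining $\exponent_\ell$ is attained (or at least approached) by matrices whose partial distances satisfy $D_1\leq D_2\leq\dots\leq D_\ell$. Fix any such matrix $G=[g_1^T,\dots,g_\ell^T]^T$. For each $i\in\{1,\dots,\ell\}$, consider the subcode $\code{_i}=\langle g_i,g_{i+1},\dots,g_\ell\rangle$. Since $G$ is invertible, $\{g_i,\dots,g_\ell\}$ are linearly independent, so $\code{_i}$ has length $\ell$ and dimension $\ell-i+1$. By Corollary~\ref{cor:mindist} applied to this nested subcode,
\begin{align*}
\dmin(\code{_i}) = \min\{D_i,D_{i+1},\dots,D_\ell\} = D_i,
\end{align*}
where the second equality uses the assumed monotonicity $D_i\leq D_{i+1}\leq\dots\leq D_\ell$.

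Next, since $d(\ell,\ell-i+1)$ is by definition the largest possible minimum distance of any binary linear code of length $\ell$ and dimension $\ell-i+1$, the bound $D_i=\dmin(\code{_i})\leq d(\ell,\ell-i+1)$ holds for every $i$. Plugging this into the exponent formula of Theorem~\ref{thm:l-ary} gives
\begin{align*}
\exponent(G)=\frac{1}{\ell}\sum_{i=1}^\ell \log_\ell D_i \leq \frac{1}{\ell}\sum_{i=1}^\ell \log_\ell d(\ell,\ell-i+1).
\end{align*}
Taking the maximum over all $G$ satisfying the monotonicity hypothesis and invoking Corollary~\ref{cor:monotonic} yields the claimed upper bound on $\exponent_\ell$.

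No step is genuinely hard: the content is entirely in correctly identifying the dimension and minimum distance of the subcode $\code{_i}$. The one point to be careful about is that monotonicity of the $D_i$'s is essential in order to equate $\min_{j\geq i} D_j$ with $D_i$; without the reduction of Corollary~\ref{cor:monotonic}, the minimum of the partial distances starting from index $i$ could be strictly smaller than $D_i$ and the argument would only bound that minimum, not $D_i$ itself. Once this reduction is in place, the proof reduces to two lines.
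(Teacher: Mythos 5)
Your proposal is correct and follows essentially the same route as the paper's proof: reduce to matrices with nondecreasing partial distances via Corollary~\ref{cor:monotonic}, identify $D_i=\min_{j\geq i}D_j=\dmin(\langle g_i,\dots,g_\ell\rangle)$ via Corollary~\ref{cor:mindist}, bound this by $d(\ell,\ell-i+1)$, and plug into the formula of Theorem~\ref{thm:l-ary}. Your added remarks (the dimension count from invertibility and the necessity of monotonicity) only make explicit what the paper leaves implicit.
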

\begin{proof}
Let $G$ be an $\ell\times\ell$ matrix with partial distances $\{D_i\}_{i=1}^\ell$ such that $\exponent(G) = \exponent_\ell$. Corollary \ref{cor:monotonic} lets us assume without loss of generality that $D_i\leq D_{i+1}$ for all $i$. We therefore obtain 
\[
D_i = \min_{j\geq i} D_j = \dmin(\langle g_i,\dotsc,g_\ell\rangle)\leq d(\ell,\ell-i+1),
\]
 where the second equality follows from Corollary \ref{cor:mindist}. 
\end{proof}

Lemma \ref{lem:generalbound} allows us to use existing bounds on the minimum distances of binary codes to bound $\exponent_\ell$:
\begin{example}[Sphere Packing Bound]
Applying the sphere packing bound for $d(\ell,\ell-i+1)$ in Lemma
\ref{lem:generalbound},
we get
\begin{align}\label{eqn:hamming}
\exponent_\ell \leq \frac 1\ell \sum_{i=1}^\ell \log_{\ell} \tilde D_i,
\end{align}
where 
\begin{align*}
\tilde D_i = \max\left\{ D: \sum_{j=0}^{\lfloor \frac{D-1}{2}\rfloor} \binom{\ell}{j} \leq 2^{i-1}
\right\}.
\end{align*}
\end{example}
Note that for small values of $n$ for which $d(n,k)$ is known for all $k\leq n$, the bound in Lemma \ref{lem:generalbound} can be evaluated exactly.

\subsection{Improved Upper Bound}
Bounds given in Section \ref{sub:upperbound} relate the partial distances $\{D_i\}$ to minimum distances of linear codes, but are loose since they do not exploit the dependence 
among the $\{D_i\}$. 
In order to improve the upper bound we use the following parametrization:
Consider an $\ell \times \ell$ matrix $G= [g_1^T,\dots,g_\ell^T]^T$. Let 
\begin{align*}
T_i & = \{k: g_{ik} = 1, g_{jk} = 0 \textrm{ for all } j>i\}\\
S_i & = \{k: \exists j>i \textrm { s.t.\ } g_{jk} = 1 \},
\end{align*}
and let $t_i = |T_i|$.
\begin{example}
For the matrix
\begin{align*}
F = \left[
\begin{array}{cccc}
0 & 0 & 0 & 1 \\
0 & 1 & 1 & 0 \\
1 & 1 & 0 & 0 \\
1 & 0 & 0 & 0 
\end{array}\right].
\end{align*}
$T_2 = \{3\}$ and $S_2 = \{1,2\}$.
\end{example}
Note that $T_i$ are disjoint and $S_i = \cup_{j=i+1}^\ell T_j$. Therefore, $|S_i| = \sum_{j=i+1}^\ell t_i$. Denoting the restriction of $g_j$ to the indices in $S_i$ by $g_{jS_i}$, we have
\begin{align}\label{eqn:D-split}
D_i = t_i + s_i, 
\end{align}
where $s_i \triangleq \dh(g_{iS_i}, \langle g_{(i+1)S_i}, \dotsc, g_{\ell S_i} \rangle)$. By a similar reasoning as in the proof of Lemma \ref{lem:monotonic}, it can be shown that there exists a matrix $G$ with 
\[
s_i \leq \dh(g_{jS_i},\langle g_{(j+1)S_i},\dotsc,g_{\ell S_i}\rangle) \quad \forall i<j,
%\dh(g_{iS_i}, \langle g_{(i+1)S_i}, \dotsc, g_{\ell S_i} \rangle) \leq \\ \dh(g_{jS_j}, \langle g_{(j+1)S_j}, \dotsc, g_{\ell S_j} \rangle) \quad\forall i<j,
\]
and
\[
\exponent(G) = \exponent_\ell.
\]
Therefore, for such a matrix $G$, we have (cf.\ proof of Lemma \ref{lem:generalbound})
\begin{align}\label{eqn:monotonic-s}
s_i \leq d(|S_i|, \ell-i+1).
\end{align}
Using the structure of the set $S_i$, we can bound $s_i$ further:
\begin{lemma}[Bound on Sub-distances]\label{lem:halfbound}
$s_i \leq \lfloor \frac{|S_i|}{2} \rfloor$.
\end{lemma}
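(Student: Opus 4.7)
My plan is to use the probabilistic method. Set $n = |S_i|$ and, for each column $k \in S_i$, let $T_k = \{j \in \{i+1,\dots,\ell\} : g_{jk}=1\}$. The definition of $S_i$ is exactly the statement that $T_k$ is nonempty for every $k \in S_i$; this is the only property of $S_i$ that the proof uses. The goal is to produce a codeword $c \in C := \langle g_{(i+1)S_i},\dots,g_{\ell S_i}\rangle$ with $d_H(g_{iS_i},c) \leq \lfloor n/2 \rfloor$.

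First I would choose a random subset $J \subseteq \{i+1,\dots,\ell\}$ by including each index independently with probability $1/2$, and set
\[
c = \sum_{j \in J} g_{jS_i} \in C.
\]
For each $k \in S_i$, the $k$-th coordinate of $c$ equals $|J \cap T_k| \bmod 2$. Because $J \cap T_k$ is a uniformly random subset of the nonempty set $T_k$, its parity is Bernoulli$(1/2)$, independent of the value of $g_{ik}$. Hence $\Pr[c_k \neq g_{ik}] = 1/2$ for every $k \in S_i$, and by linearity of expectation
\[
\mathbb{E}\bigl[d_H(g_{iS_i},c)\bigr] = \sum_{k \in S_i} \Pr[c_k \neq g_{ik}] = \frac{n}{2}.
\]

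Since $d_H(g_{iS_i},c)$ is integer-valued with mean exactly $n/2$, there exists a realization with $d_H(g_{iS_i},c) \leq \lfloor n/2 \rfloor$ (trivially when $n$ is even; when $n$ is odd the mean is non-integer so some realization is strictly below $n/2$). This realization is a codeword of $C$, so $s_i = d_H(g_{iS_i}, C) \leq \lfloor |S_i|/2 \rfloor$, as required.

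I do not anticipate a real obstacle here; the only delicate point is recognizing that the defining property of $S_i$ (every column has at least one $1$ among rows $i+1,\dots,\ell$) is exactly what forces each coordinate of the random codeword to be uniform, and hence what forces the mean distance to be $n/2$ regardless of $g_{iS_i}$.
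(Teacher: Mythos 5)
Your proof is correct, and it takes a genuinely different route from the paper. The paper argues constructively: it partitions $S_i$ into the disjoint blocks $T_j$ of columns whose lowest $1$ lies in row $j$, and chooses the coefficients $\alpha_{i+1},\dotsc,\alpha_\ell$ greedily in that order; since rows below $j$ vanish on $T_j$, the coordinates in $T_j$ are frozen once $\alpha_j$ is fixed, and the better of the two choices of $\alpha_j$ costs at most $\lfloor |T_j|/2\rfloor$ on that block, so summing over blocks gives $\lfloor |S_i|/2\rfloor$. You instead use averaging: a uniformly random $\{0,1\}$-combination of $g_{(i+1)S_i},\dotsc,g_{\ell S_i}$ has each coordinate unbiased precisely because every column of that submatrix is nonzero (which is exactly the definition of $S_i$), so the expected distance to $g_{iS_i}$ is $|S_i|/2$ and some codeword achieves at most $\lfloor |S_i|/2\rfloor$; your handling of the parity (a uniformly random subset of a nonempty set has uniform parity) and of the rounding step is sound. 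Your argument is shorter and isolates the only structural fact needed — that the code $\langle g_{(i+1)S_i},\dotsc,g_{\ell S_i}\rangle$ has no identically-zero coordinate, i.e., a standard covering-radius-type bound — while the paper's greedy construction is in effect a derandomization that exhibits an explicit minimizing combination via its block decomposition; note also that your per-column sets $T_k$ clash notationally with the paper's per-row $T_j$, so a different symbol would avoid confusion.
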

\begin{proof}
We will find a linear combination of $\{g_{(i+1)S_i},\dotsc,g_{\ell S_i}\}$ whose Hamming distance to $g_{iS_i}$ is at most $\lfloor \frac{|S_i|}{2} \rfloor$. To this end define %the sets
%\begin{align*}
%V_m = \{j\in S_i: g_{mj} = 1 \textrm{ and } g_{kj} = 0\;\; \forall k > m\}.
%\end{align*}
%for $m = i+1,\dotsc,\ell$.
$w = \sum_{j=i+1}^\ell \alpha_j g_{jS_i}$, where $\alpha_j\in\{0,1\}$. Also
define $w_k = \sum_{j=i+1}^k \alpha_j g_{jS_i}$. Noting that the sets $T_j$s are
disjoint with $\cup_{j=i+1}^\ell T_j = S_i$, we have $\dh(g_{iS_i},w) = \sum_{j=i+1}^\ell \dh(g_{iT_j}, w_{T_j})$. 

We now claim that choosing the $\alpha_j$s in the order $\alpha_{i+1},\dotsc,\alpha_\ell$ by 
\begin{align}\label{eqn:alpha}
\argmin_{\alpha_j \in\{0,1\}} \dh(g_{iT_j},w_{{j-1}T_j}+\alpha_j g_{jT_j}),
\end{align}
we obtain $\dh(g_{iS_i},w)\leq \lfloor \frac{|S_i|}{2} \rfloor$. To see this, note that by definition of the sets $T_j$ we have $w_{T_{j}} = w_{jT_{j}}$. Also observe that by the rule (\ref{eqn:alpha}) for choosing $\alpha_j$, we have $\dh(g_{iT_j}, w_{jT_j}) \leq \lfloor \frac{|T_j|}{2} \rfloor$. Thus,
\begin{align*}
\dh(g_{iS_i},w) &= \sum_{j=i+1}^\ell \dh(g_{iT_j}, w_{T_j})\\
&=\sum_{j=i+1}^\ell \dh(g_{iT_j}, w_{jT_j}) \\
&\leq \sum_{j=i+1}^\ell \left\lfloor \frac{\vert T_j \vert}{2}\right\rfloor \leq \left\lfloor \frac{|S_i|}{2}\right\rfloor.
\end{align*}
\end{proof}

Combining (\ref{eqn:D-split}), (\ref{eqn:monotonic-s}) and Lemma \ref{lem:halfbound}, and noting that the invertibility of $G$ implies $\sum t_i = \ell$, we obtain the following:
\begin{lemma}[Improved Upper Bound]\label{lem:strongerupperbound}
\begin{align*}
\exponent_\ell \leq \max_{\sum_{i=1}^\ell t_i = \ell} \frac 1\ell \sum_{i=1}^\ell \log_\ell (t_i + s_i)%D_i \leq  t_i+\frac{1}{2}\sum_{j=i+1}^\ell t_j 
\end{align*}
where 
\begin{align*}
s_i =  \min \Bigl\{ \bigl\lfloor \frac 12 \sum_{j=i+1}^\ell t_j \bigr\rfloor, 
d\bigl(\sum_{j=i+1}^\ell t_j, \ell-i+1\bigr) \Bigr\}.
\end{align*}
\end{lemma}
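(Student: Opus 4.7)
The plan is to combine all the ingredients developed just before the statement. Let $G = [g_1^T,\dots,g_\ell^T]^T$ be an $\ell\times\ell$ matrix achieving $\exponent(G) = \exponent_\ell$. By Theorem~\ref{thm:l-ary},
\[
\exponent_\ell = \frac{1}{\ell}\sum_{i=1}^\ell \log_\ell D_i,
\]
so it is enough to upper bound each $D_i$ in terms of the profile $(t_i)_{i=1}^\ell$ associated to $G$ and then maximize.

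First I would invoke the invertibility of $G$: the sets $T_1,\dots,T_\ell$ partition $\{1,\dots,\ell\}$ (each column must have a ``topmost'' $1$, since otherwise a column would be identically zero), and hence $\sum_{i=1}^\ell t_i = \ell$ and $|S_i| = \sum_{j=i+1}^\ell t_j$. Next, using the decomposition~(\ref{eqn:D-split}), I would write $D_i = t_i + s_i$, where $s_i$ is the sub-distance of $g_{iS_i}$ from $\langle g_{(i+1)S_i},\dots,g_{\ell S_i}\rangle$ inside the coordinate block $S_i$ (columns where the $t_i$ entries of $g_i$ sit outside $S_i$ and contribute the summand $t_i$ unconditionally).

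The rest is to upper bound $s_i$. Two bounds are available. Lemma~\ref{lem:halfbound} immediately gives $s_i \le \lfloor |S_i|/2\rfloor = \lfloor \tfrac12 \sum_{j>i} t_j\rfloor$, independently of any assumption on $G$. For the second bound, I follow the remark in the text that ``by a similar reasoning as in the proof of Lemma~\ref{lem:monotonic}'' one can assume, without loss of generality in the maximization, that the restricted partial distances $s_i$ are non-decreasing in~$i$ (swap adjacent rows whenever a row swap increases the product of the $s$'s, exactly as in Lemma~\ref{lem:monotonic}; this preserves the $t_i$'s and does not decrease $\exponent(G)$). Under that monotonicity, $s_i$ is the minimum distance of the restricted code generated by $g_{iS_i},\dots,g_{\ell S_i}$ (via Corollary~\ref{cor:mindist}), which has length $|S_i|$ and dimension at most $\ell-i+1$, giving~(\ref{eqn:monotonic-s}): $s_i \le d(|S_i|,\ell-i+1)$.

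Combining the two bounds yields
\[
s_i \;\le\; \min\Bigl\{\bigl\lfloor \tfrac12 \textstyle\sum_{j=i+1}^\ell t_j\bigr\rfloor,\; d\bigl(\textstyle\sum_{j=i+1}^\ell t_j,\; \ell-i+1\bigr)\Bigr\},
\]
and therefore
\[
\exponent_\ell \;=\; \frac{1}{\ell}\sum_{i=1}^\ell \log_\ell(t_i+s_i) \;\le\; \max_{\sum t_i=\ell}\; \frac{1}{\ell}\sum_{i=1}^\ell \log_\ell(t_i+s_i),
\]
which is the claim. The main obstacle will be making rigorous the claim that one can restrict the maximization to matrices whose restricted sub-distances are non-decreasing, so that~(\ref{eqn:monotonic-s}) is applicable; this is the direct analogue of Lemma~\ref{lem:monotonic} but performed on the columns in $S_i$ rather than on the full rows, and I expect the same swap argument to go through verbatim.
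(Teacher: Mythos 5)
Your proposal is correct and follows essentially the same route as the paper, which obtains the lemma precisely by combining the decomposition $D_i = t_i + s_i$ of (\ref{eqn:D-split}), the bound (\ref{eqn:monotonic-s}), Lemma~\ref{lem:halfbound}, and the fact that invertibility forces $\sum_i t_i = \ell$. The swap argument you flag as the main obstacle is likewise left at the level of ``by a similar reasoning as in the proof of Lemma~\ref{lem:monotonic}'' in the paper, so your treatment matches it in both structure and level of detail (only note that the columns in $T_i$ are those whose \emph{bottommost} one lies in row $i$, not the topmost).
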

The bound given in the above lemma is plotted in Figure~\ref{fig:bound}. 
It is seen that no matrix with exponent greater than $\frac 12$ can be found for $\ell \leq 10$. 

In addition to providing an upper bound to $\exponent_\ell$, Lemma \ref{lem:strongerupperbound} narrows down the search for matrices which achieve $\exponent_\ell$. 
In particular, it enables us to list all sets of possible partial distances with exponents greater than $\frac 12$. 
For $11 \leq \ell  \leq 14$, an exhaustive search for matrices with a ``good'' set of partial distances bounded by Lemma \ref{lem:strongerupperbound} 
(of which there are 285) shows that no matrix with exponent greater than $\frac 12$ exists.

\section{Construction Using BCH Codes}\label{sec:construction}
We will now show how to construct 
a matrix $G$ of dimension $\ell=16$ with exponent exceeding $\frac12$. In fact, we will show
how to construct the best such matrix. More generally, we will
show how BCH codes give rise to ``good matrices.'' 
Our construction of $G$ consists of taking an $\ell \times \ell$ binary
matrix whose $k$ last rows form a generator matrix of a $k$-dimensional BCH code.
%  so that its $k \times \ell$ dimensional sub-matrix consisting of the last $k$ rows of $G$
%is a generator matrix of a $k$-dimensional BCH code of length $\ell$. 
The partial distance $D_k$
is then at least as large as the minimum distance of this $k$-dimensional code.

To describe the partial distances explicitly we make use of the spectral view
of BCH codes as sub-field sub-codes of Reed-Solomon codes as described in \cite{Bla84}.
We restrict our discussion to BCH codes of length $\ell=2^m-1$, $m \in \naturals$.

Fix $m \in \naturals$. Partition the set of integers $\{0, 1, \dots, 2^m-2\}$
into a set ${\mathcal C}$ of {\em chords}, 
\begin{align*}
{\mathcal C} & = \cup_{i=0}^{2^m-2} \{2^k i \mod(2^m-1) : k \in \naturals\}.
\end{align*}
\begin{example}[Chords for $m=5$]\label{exa:listofchords}
For $m=5$ the list of chords is given by
\begin{align*}
{\mathcal C} = \{ & \{0\}, \{1, 2, 4, 8, 16\}, \{3, 6, 12, 17, 24\}, \\
& \{5, 9, 10, 18, 20\},  \{7, 14, 19, 25, 28\}, \\
& \{11, 13, 21, 22, 26\}, \{15, 23, 27, 29, 30\}\}. 
\end{align*}
\qed
\end{example}
Let $C$ denote the number of chords and assume that the chords
are ordered according to 
their smallest element as in Example~\ref{exa:listofchords}.
Let $\mu(i)$ denote the minimal element of chord $i$, $1 \leq i \leq C$
and let $l(i)$ denote the number of elements in chord $i$. Note that by this
convention $\mu(i)$ is increasing. It is well known that $1 \leq l(i) \leq m$ and
that $l(i)$ must divide $m$.
\begin{example}[Chords for $m=5$]\label{exa:chordproperties}
In Example \ref{exa:listofchords} we have $C=7$, $l(1)=1$, $l(2)=\dots=l(7)=5=m$,
$\mu(1)=0$, $\mu(2)=1$, $\mu(3)=3$, $\mu(4)=5$, $\mu(5)=7$, $\mu(6)=11$, $\mu(7)=15$.
\qed
\end{example}
Consider a BCH code of length $\ell$ and dimension $\sum_{j=k}^C l(j)$ for some $k\in\{1,\dotsc,C\}$. It is well-known that this code has minimum distance at least $\mu(k)+1$. Further, the generator matrix of this code is obtained by concatenating the generator matrices of two BCH codes of respective dimensions $\sum_{j=k+1}^C l(j)$ and $l(k)$. This being true for all $k\in\{1,\dotsc,C\}$, it is easy to see that the generator matrix of the $\ell$ dimensional (i.e., rate 1) BCH code, which will be the basis of our construction, has the property that its last $\sum_{j=k}^C l(j)$ rows form the generator matrix of a BCH code with minimum distance at least $\mu(k)+1$. 
This translates to the following lower
bound on partial distances $\{D_i\}$: Clearly, 
$D_i$ is least as large as the minimum distance of
the code generated by the last $\ell-i+1$ rows of the matrix. Therefore, if $\sum_{j=k+1}^{C} l(j) \leq \ell-i+1
\leq \sum_{j=k}^C l(j)$, then 
\[
D_{i} \geq \mu(k) +1.
\]
The exponent $\exponent$ associated with these partial design distances can then be bounded as
\begin{align}\label{eqn:exponentfromchord}
\exponent & \geq  \frac{1}{2^m-1}\sum_{i=1}^{C}  l(i) \log_{2^m-1}(\mu(i)+1).
\end{align}
\begin{example}[BCH Construction for $\ell=31$]\label{exa:misfive}
From the list of chords computed in Example~\ref{exa:listofchords} we obtain
\begin{align*}
\exponent & \geq  \frac{5}{31} \log_{31}(2 \cdot 4 \cdot 6 \cdot 8 \cdot 12 \cdot 16)
\approx 0.526433.
\end{align*}
An explicit check of the partial distances reveals that the above inequality is in fact an equality.
\qed
\end{example}

For large $m$, the bound in (\ref{eqn:exponentfromchord}) is not convenient to work with.
The asymptotic behavior of the exponent is however easy to assess by considering the following bound.
Note that no $\mu(i)$ (except for $i=1$) can be an even number since otherwise  
$\mu(i)/2$, being an integer, would be contained in chord $i$, a contradiction.
It follows that for the smallest exponent all chords (except
chord $1$) must be of length $m$ and that $\mu(i)=2i+1$.  
This gives rise to the bound
\begin{align}\label{eqn:bchexponent}
\exponent \geq &\frac{1}{(2^m-1)\log(2^m - 1)} \\
&\cdot 
\left(\sum_{k=1}^{a} m \log(2k) + (2^m-2 - a m) \log(2a+2)\right), \nonumber
\end{align}
where $a=\lfloor \frac{2^m-2}{m} \rfloor$.
It is easy to see that as $m \to \infty$ the above exponent tends to 1,
 the best exponent one can hope for (cf.\ Lemma~\ref{lem:exponent1}).
We have also seen in Example~\ref{exa:misfive} that for $m=5$ we achieve an exponent strictly
above $\frac12$.  

Binary BCH codes exist for lengths of the form $2^m-1$. To construct matrices of
other lengths, we use {\em shortening}, a standard method 
to construct good codes of smaller lengths from an existing code, which we recall here: Given a code
$\code{}$, fix a symbol, say the first one, and divide the codewords into two
sets of equal size
depending on whether the first symbol is a 1 or a 0. Choose the
set having zero in the first symbol and delete this symbol. The resulting
codewords form a linear code with both the length and dimension decreased by
one. The minimum distance of the resulting code is at least as large as the
initial distance. The generator matrix of the resulting code can be obtained
from the original generator matrix by removing a generator vector having a one
in the first symbol, adding this vector to all the remaining vectors starting
with a one and removing the first column.

Now consider an $\ell \times \ell$ matrix $G_\ell$. Find the column $j$ with the
longest run of zeros at the bottom, and let $i$ be the last 
row with a $1$ in this column. Then add the $i$th row to all the rows with a
$1$ in the $j$th column. Finally, remove the $i$th row and the $j$th column to obtain 
an $(\ell-1) \times (\ell-1)$ matrix $G_{\ell-1}$. 
The matrix $G_{\ell-1}$ satisfies the following property. 
\begin{lemma}[Partial Distances after Shortening]
Let the partial distances of $G_\ell$ be given by $\{D_1\leq \dots\leq D_\ell\}$. Let
$G_{\ell-1}$ be the resulting matrix obtained by applying the above shortening
procedure with the $i$th row and the $j$th column. Let the partial distances of $G_{\ell-1}$ be 
$\{D'_1,\dots, D'_{\ell-1}\}$. We have
\begin{align}
D'_k & \geq D_k, \quad 1 \leq k \leq i-1 \label{equ:partialdista}\\
D'_k &= D_{k+1}, \quad i \leq k \leq \ell-1. \label{equ:partialdistb}
\end{align}
\end{lemma}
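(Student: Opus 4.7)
The plan is to track the rows of $G_{\ell-1}$ explicitly in terms of those of $G_\ell$ and then express the new partial distances as minimizations over easily identifiable subsets of the old ambient codes. Write $\beta_m \triangleq (g_m)_j$ for the $j$-th entry of row $g_m$. Since $i$ is the last row of $G_\ell$ with a $1$ in column $j$, we have $\beta_m = 0$ for all $m > i$. Reading off the shortening procedure, the rows of $G_{\ell-1}$, when re-embedded in $\{0,1\}^\ell$ by reinserting a $0$ in position $j$, are
\begin{align*}
g'_m = g_m + \beta_m g_i \text{ for } m<i, \qquad g'_m = g_{m+1} \text{ for } m\geq i,
\end{align*}
and all of these have a $0$ in column $j$. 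Consequently, deleting column $j$ preserves Hamming distances among the $g'_m$, and all comparisons may be carried out inside $\{0,1\}^\ell$.

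For the case $k\ge i$ (equation (\ref{equ:partialdistb})), this identification makes the claim immediate: $g'_k=g_{k+1}$ and $\langle g'_{k+1},\dotsc,g'_{\ell-1}\rangle=\langle g_{k+2},\dotsc,g_\ell\rangle$ as subsets of $\{0,1\}^\ell$, giving $D'_k=\dh(g_{k+1},\langle g_{k+2},\dotsc,g_\ell\rangle)=D_{k+1}$.

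For $k<i$ (equation (\ref{equ:partialdista})), the key structural step is the claim that, inside $\{0,1\}^\ell$,
\begin{align*}
\mathcal{C}' \triangleq \langle g'_{k+1},\dotsc,g'_{\ell-1}\rangle = \mathcal{C}_0 \triangleq \{c\in \mathcal{C} : c_j = 0\},
\end{align*}
where $\mathcal{C} \triangleq \langle g_{k+1},\dotsc,g_\ell\rangle$. Each generator of $\mathcal{C}'$ lies in $\mathcal{C}_0$, and a short dimension count gives equality: $\mathcal{C}_0$ has codimension $1$ in $\mathcal{C}$ (since $g_i\in\mathcal{C}$ has $j$-th entry $1$), and the $\ell-k-1$ generators of $\mathcal{C}'$ are linearly independent because, after the row operations, the $\ell-1$ rows of $G_\ell$ other than $g_i$ form a basis of $\{v:v_j=0\}$.

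Now $g'_k = g_k + \beta_k g_i$ in the ambient space, so
\begin{align*}
D'_k \;=\; \min_{c\in\mathcal{C}_0}\dh(g_k+\beta_k g_i,\, c) \;=\; \min_{c\in\mathcal{C}_0}\dh(g_k,\, c+\beta_k g_i).
\end{align*}
If $\beta_k=0$ the right-hand side is $\min_{c\in\mathcal{C}_0}\dh(g_k,c)$; if $\beta_k=1$, the substitution $c\mapsto c+g_i$ shows it equals $\min_{c\in\mathcal{C}\setminus\mathcal{C}_0}\dh(g_k,c)$. In either case the minimum is taken over a subset of $\mathcal{C}$, so $D'_k \geq \min_{c\in\mathcal{C}}\dh(g_k,c) = D_k$. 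The only mildly subtle point is the dimension-count identification $\mathcal{C}' = \mathcal{C}_0$; the rest is a routine unwinding of the shortening step and the definition of partial distance.
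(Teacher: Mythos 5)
Your proof is correct, but for the range $1 \leq k \leq i-1$ it takes a genuinely different route from the paper. The paper drops down to minimum distances: it observes that $\langle g'_k,\dots,g'_{\ell-1}\rangle$ is a shortening of $\langle g_k,\dots,g_\ell\rangle$, and chains $D'_k \geq \dmin(\langle g'_k,\dots,g'_{\ell-1}\rangle) \geq \dmin(\langle g_k,\dots,g_\ell\rangle) = D_k$, where the last equality uses the assumed ordering $D_1\leq\dots\leq D_\ell$ together with Corollary~\ref{cor:mindist}. You instead stay at the level of partial distances: you identify the new ambient code $\langle g'_{k+1},\dots,g'_{\ell-1}\rangle$ (re-embedded) exactly as the shortened subcode $\{c\in\mathcal{C}: c_j=0\}$ via a dimension count, and handle the row operation $g_k\mapsto g_k+\beta_k g_i$ by the coset substitution $c\mapsto c+\beta_k g_i$, so that $D'_k$ becomes a minimum of $\dh(g_k,\cdot)$ over a subset of $\mathcal{C}$ (either the $c_j=0$ subcode or its complementary coset), hence at least $D_k$. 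What your argument buys is that it never invokes the monotonicity hypothesis or the minimum-distance identification, so it actually establishes \eqref{equ:partialdista} for arbitrary (unordered) partial distances, and the exact identification $\mathcal{C}'=\{c\in\mathcal{C}:c_j=0\}$ makes the "shortening" claim precise; what the paper's argument buys is brevity, since Corollary~\ref{cor:mindist} and the ordering are already available. The $i\leq k\leq \ell-1$ case is handled identically in both proofs. One cosmetic remark: the full strength of $\mathcal{C}'=\mathcal{C}_0$ is not needed for the inequality — the inclusion $\mathcal{C}'\subseteq\mathcal{C}$ (together with $g_i\in\mathcal{C}$) already suffices — though the equality is true and your dimension count for it is sound.
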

\begin{proof}
Let $G_\ell=[g_1^T,\dots,g_\ell^T]^T$ and 
$G_{\ell - 1}=[{g'_1}^T,\dots, {g'_{\ell-1}}^T]^T$
For $i \leq k$, $g'_k$ is obtained by removing the $j$th column of $g_{k+1}$. Since
all these rows have a zero in the $j$th position their partial distances do not
change, which in turn implies \eqref{equ:partialdistb}. 

For $k\leq i$, note that the minimum distance of the code $\code'{} = \langle g'_{k}, \dots,
g'_{\ell-1}\rangle$ is obtained by shortening $\code{}= \langle g_{k},
\dots,g_{\ell}\rangle$.
Therefore, $D'_{k} \geq \dmin(\code'{}) \geq \dmin(\code{}) = D_k$. 
\end{proof}

\begin{example}[Shortening of Code]
Consider the matrix 
\begin{align*}
\left[
\begin{array}{ccccc}
1 & 0 & 1 & 0 & 1 \\
0 & 0 & 1 & 0 & 1 \\
0 & 1 & 0 & 0 & 1 \\
0 & 0 & 0 & 1 & 1 \\
1 & 1 & 0 & 1 & 1 
\end{array}\right].
\end{align*}
The partial distances of this matrix are $\{1, 2, 2, 2, 4\}$. 
According to our procedure, we pick the $3$rd column since it has 
 a run of three zeros at the bottom (which is maximal). We then add
the second row to the first row (since it also has a $1$ in the third column). 
Finally, deleting column $3$ and row $2$ we
obtain the matrix
\begin{align*}
\left[
\begin{array}{cccc}
1 & 0 & 0 & 0 \\
0 & 1 & 0 & 1 \\
0 & 0 & 1 & 1 \\
1 & 1 & 1 & 1 
\end{array}\right].
\end{align*}
The partial distances of this matrix are $\{1, 2, 2, 4\}$.
\qed
\end{example}

\begin{example}[Construction of Code with $\ell=16$]
Starting with the $31 \times 31$ BCH matrix and repeatedly applying the above
procedure results in the exponents listed in Table~\ref{tab:smallexponent}.
%\begin{table}[h]
%\begin{center}
%\begin{tabular}{c|c|c|c|c|c|c|c}
%$\ell$ & $\text{exponent}$ & $\ell$ & $\text{exponent}$ &  $\ell$ & $\text{exponent}$ &  $\ell$ & $\text{exponent}$\\  \hline
%$16$ & $0.51828$  &  $20$ & $0.49659$  &  $24$ & $0.50445$ &  $28$ & $0.51457$   \\ 
%$17$ & $0.49175$  &  $21$ & $0.48705$  &  $25$ & $0.50040$ &  $29$ & $0.51710$   \\  
%$18$ & $0.48968$  &  $22$ & $0.49445$  &  $26$ & $0.50470$ &  $30$ & $0.52205$   \\     
%$19$ & $0.48742$  &  $23$ & $0.50071$  &  $27$ & $0.50836$ &  $31$ & $0.52643$   \\     
%\end{tabular}
%\vspace{0.1in}
%\caption{\label{tab:smallexponent} The best exponents achieved by shortening the
%BCH matrix of length 31.}
%\end{center}
%\end{table}

\begin{table}[h]
\begin{center}
\begin{tabular}{c|c|c|c|c|c|c|c}
$\ell$ & $\text{exponent}$ & $\ell$ & $\text{exponent}$ &  $\ell$ & $\text{exponent}$ &  $\ell$ & $\text{exponent}$\\  \hline
$31$ & $0.52643$  &  $27$ & $0.50836$  &  $23$ & $0.50071$ &  $19$ & $0.48742$   \\ 
$30$ & $0.52205$  &  $26$ & $0.50470$  &  $22$ & $0.49445$ &  $18$ & $0.48968$   \\  
$29$ & $0.51710$  &  $25$ & $0.50040$  &  $21$ & $0.48705$ &  $17$ & $0.49175$   \\     
$28$ & $0.51457$  &  $24$ & $0.50445$  &  $20$ & $0.49659$ &  $16$ & $0.51828$  \\     
\end{tabular}
\vspace{0.1in}
\caption{\label{tab:smallexponent} The best exponents achieved by shortening the
BCH matrix of length 31.}
\end{center}
\end{table}

The $16 \times 16$ matrix having an exponent $0.51828$ is
\begin{align*}
\left[
\begin{array}{cccccccccccccccc}
1 & 0 & 0 & 1 & 1 & 1 & 0 & 0 & 0 & 0 & 1 & 1 & 1 & 1 & 0 & 1 \\  
0 & 1 & 0 & 0 & 1 & 0 & 0 & 1 & 0 & 1 & 1 & 1 & 0 & 0 & 1 & 1 \\
0 & 0 & 1 & 1 & 1 & 1 & 1 & 0 & 0 & 1 & 1 & 0 & 1 & 1 & 1 & 0 \\  
0 & 1 & 0 & 1 & 0 & 1 & 1 & 0 & 1 & 0 & 1 & 0 & 0 & 0 & 0 & 0 \\  
1 & 1 & 1 & 1 & 0 & 0 & 0 & 0 & 0 & 0 & 1 & 0 & 1 & 1 & 0 & 1 \\  
0 & 0 & 1 & 0 & 0 & 1 & 0 & 1 & 0 & 1 & 0 & 0 & 0 & 1 & 1 & 0 \\  
0 & 0 & 1 & 0 & 0 & 0 & 0 & 0 & 0 & 1 & 1 & 1 & 0 & 0 & 0 & 0 \\  
0 & 1 & 0 & 1 & 1 & 1 & 0 & 0 & 1 & 0 & 1 & 1 & 0 & 0 & 1 & 0 \\ 
1 & 1 & 1 & 0 & 0 & 1 & 1 & 0 & 1 & 0 & 0 & 1 & 0 & 1 & 0 & 0 \\ 
1 & 0 & 1 & 0 & 1 & 0 & 1 & 1 & 1 & 0 & 1 & 1 & 0 & 1 & 0 & 1 \\ 
1 & 1 & 1 & 0 & 0 & 0 & 0 & 0 & 0 & 0 & 0 & 1 & 1 & 0 & 1 & 0 \\ 
1 & 0 & 0 & 1 & 1 & 0 & 0 & 0 & 0 & 1 & 0 & 1 & 1 & 0 & 1 & 1 \\ 
1 & 1 & 1 & 1 & 1 & 0 & 1 & 0 & 0 & 0 & 0 & 1 & 0 & 1 & 0 & 0 \\ 
1 & 0 & 1 & 0 & 1 & 1 & 1 & 1 & 0 & 1 & 0 & 0 & 0 & 0 & 0 & 1 \\ 
1 & 0 & 1 & 0 & 0 & 0 & 0 & 1 & 0 & 1 & 1 & 1 & 1 & 1 & 0 & 0 \\ 
1 & 1 & 1 & 1 & 1 & 1 & 1 & 1 & 1 & 1 & 1 & 1 & 1 & 1 & 1 & 1 \\ 
\end{array}\right].
\end{align*}
The partial distances of this matrix are $\{16,8,8,8,8,6,6,4,4,4,4,2,2,2,2,1\}$.
Using Lemma~\ref{lem:strongerupperbound} we observe that for the $16\times 16$ case there are only $11$
other possible sets of partial distances which have a better exponent than the above matrix. An
exhaustive search for matrices with such sets of partial distances confirms that no such matrix exists.
Hence, the above matrix achieves the best possible exponent among all $16 \times 16$ matrices.
\qed
\end{example}

\section*{Acknowledgment}
We would like to thank E. Telatar for helpful discussions and his suggestions
for an improved exposition. This work was partially
supported by the National
Competence Center in Research on Mobile Information and Communication
Systems (NCCR-MICS), a center supported by the Swiss National Science
Foundation under grant number 5005-67322.

\appendix
In this section we prove the following lemma which is used in the proof of Lemma~\ref{lem:I-converges}.
\begin{lemma}[Mutual Information of $W^k$]
\label{lem:combinedI}
Let $W$ be a symmetric B-DMC and let $W^k$ denote the channel
\begin{align*}
W^k(y_1^k\mid x) = \prod_{i=1}^k W(y_i\mid x).
\end{align*}
If $I(W) \in (\delta, 1-\delta)$ for some $\delta > 0$, then there exists an $\eta (\delta)>0$ such that $I(W^k) - I(W) > \eta (\delta)$.
\end{lemma}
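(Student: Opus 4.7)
The plan is to reduce to $k=2$, represent $W$ as a mixture of BSCs, compute $I(W^2) - I(W)$ explicitly in that representation, and conclude by a compactness argument.

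By ignoring $k-2$ of the $k$ outputs we have $I(W^k) \geq I(W^2)$ for every $k \geq 2$, so it suffices to exhibit $\eta(\delta) > 0$ such that $I(W^2) - I(W) > \eta(\delta)$ whenever $I(W) \in (\delta, 1-\delta)$.

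Next, I would invoke the standard representation of a symmetric B-DMC as a ``revealed'' mixture of BSCs: there is a probability measure $\nu = \nu_W$ on $[0, 1/2]$ such that $W$ is equivalent to the channel in which $P \sim \nu$ is drawn and announced to the receiver, and $X$ is then transmitted through $\BSC(P)$. This gives $I(W) = 1 - \int h(p)\, d\nu(p)$. Expressing $I(W^2)$ in the same way, via two independent draws $P_1, P_2 \sim \nu$, and applying the mutual-information chain rule (the posterior of $X$ after one BSC output is Bernoulli with parameter in $\{P_1, 1-P_1\}$), a short computation yields
\[
I(W^2) - I(W) = \int_0^{1/2}\!\!\int_0^{1/2} \bigl[h(p_1 \star p_2) - h(p_2)\bigr]\, d\nu(p_1)\, d\nu(p_2),
\]
where $p_1 \star p_2 \triangleq p_1 + p_2 - 2 p_1 p_2$. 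Since $(1-2p_1)(1-2p_2) \geq 0$ on $[0, 1/2]^2$, one checks $p_2 \leq p_1 \star p_2 \leq 1/2$, so the integrand is nonnegative and vanishes exactly when $p_1 = 0$ or $p_2 = 1/2$.

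To close, observe that the space $\mathcal{P}([0, 1/2])$ of Borel probability measures on $[0, 1/2]$ is compact in the weak topology (as $[0, 1/2]$ is compact), and the subset $\{\nu : \int h\, d\nu \in [\delta, 1-\delta]\}$ is weakly closed since $h$ is continuous and bounded. The double integral above is a weakly continuous functional of $\nu$, so it attains its infimum on this compact set. By the equality analysis just given, the infimum can be zero only if $\nu(\{0\}) = 1$ or $\nu(\{1/2\}) = 1$, which would force $I(W) \in \{0, 1\}$ and contradict the hypothesis. Taking $\eta(\delta)$ to be this strictly positive infimum completes the argument. The main delicate point is justifying the BSC-mixture representation for a symmetric $W$ with a possibly infinite output alphabet $\mathcal{Y}$; once that is in hand, the chain-rule identity for the explicit formula and the weak-continuity/compactness step are both routine.
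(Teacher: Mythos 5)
Your proof is correct, but it takes a genuinely different route from the paper's. The paper also reduces to $k=2$, but then invokes the extremes-of-information-combining theorem (Theorem~\ref{thm:extremes}, cited from the literature): among symmetric channels of a given capacity, the combined mutual information is minimized by a BSC, so $I(W^2)-I(W)\geq h(2\epsilon\bar{\epsilon})-h(\epsilon)$ with $\epsilon=h^{-1}(1-I(W))$, and uniform positivity becomes a one-variable continuity statement in $\epsilon\in(\phi(\delta),\tfrac12-\phi(\delta))$. You instead keep the full BSC-mixture representation of $W$ (the same decomposition the paper itself uses, without proof, in its appendix lemma relating $P_e$ and $Z$), derive the exact identity $I(W^2)-I(W)=\iint\bigl[h(p_1\star p_2)-h(p_2)\bigr]\,d\nu(p_1)\,d\nu(p_2)$ --- which checks out via the chain rule computation you sketch --- and obtain uniform positivity from weak compactness of $\mathcal{P}([0,\tfrac12])$; your equality analysis is also sound, since the integrand vanishes only on $\{p_1=0\}\cup\{p_2=\tfrac12\}$ and a product measure $\nu\times\nu$ can be carried by that set only if $\nu(\{0\})=1$ or $\nu(\{\tfrac12\})=1$, i.e., $I(W)\in\{0,1\}$. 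What your route buys is self-containedness: you never need the extremal information-combining theorem, only the standard BSC decomposition; what the paper's route buys is brevity, reducing everything to the scalar inequality $h(2\epsilon\bar{\epsilon})>h(\epsilon)$ once the cited theorem is granted. Two small points: to get the strict inequality $>\eta(\delta)$ rather than $\geq$, simply take $\eta(\delta)$ to be half your infimum; and since $W$ is a \emph{discrete} memoryless channel, the mixture is a finite (or countable) convex combination, so the representation you flag as the delicate step is exactly the standard fact the paper already assumes.
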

The proof of Lemma~\ref{lem:combinedI} is in turn based on the following theorem.
\begin{theorem}[\cite{SSZ05, LHHH05} Extremes of Information Combining]\label{thm:extremes}
Let $W_1,\dots,W_k$ be $k$ symmetric B-DMCs with capacities $I_1,\dots,I_k$
respectively. Let $W^{(k)}$ denote the channel with transition probabilities
\begin{align*}
W^{(k)}(y_1^k\mid x) = \prod_{i=1}^k W_i(y_i\mid x).
\end{align*}

Also let $W_{\BSC}^{(k)}$ denote the channel with transition probabilities
\begin{align*}
W_{\BSC{}}^{(k)}(y_1^k\mid x) =
\prod_{i=1}^k W_{\BSC{(\epsilon_i)}}(y_i\mid x),
\end{align*} 
where $\BSC(\epsilon_i)$ denotes the binary symmetric channel (BSC) with crossover probability  $\epsilon_i \in [0,\frac{1}{2}]$, $\epsilon_i \triangleq h^{-1}(1-I_i)$, where $h$ denotes the binary entropy function. Then, $I(W^{(k)}) \geq I(W_{\BSC{}}^{(k)})$.
\end{theorem}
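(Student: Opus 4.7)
The plan is to prove this result via the canonical representation of symmetric BMS channels as mixtures of BSCs, combined with a Jensen-type inequality whose kernel is Mrs.\ Gerber's lemma. First I would recall that any symmetric BMS channel $W_i$ is equivalent to the following two-step channel: one draws a crossover probability $P_i \in [0,1/2]$ from a distribution $\mu_i$ (determined by $W_i$ via the distribution of $|\log(W_i(Y_i\mid 0)/W_i(Y_i\mid 1))|$), transmits $X$ through $\BSC(P_i)$, and reveals $P_i$ as a deterministic function of the output. Consequently
\begin{align*}
I(W_i) = 1 - \mathbb{E}[h(P_i)],
\end{align*}
which identifies $e_i = h^{-1}(\mathbb{E}[h(P_i)])$. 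Since the receiver likewise recovers each $P_i$ from its output in the parallel combination, conditioning on $(P_1,\dots,P_k)$ gives
\begin{align*}
I(W^{(k)}) = \mathbb{E}[g(P_1,\dots,P_k)],
\end{align*}
where the $P_i$ are independent with $P_i\sim\mu_i$, and $g(p_1,\dots,p_k)$ denotes the mutual information of the parallel combination of $\BSC(p_1),\dots,\BSC(p_k)$.

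Second, I would reduce the theorem to the Jensen-type inequality
\begin{align*}
\mathbb{E}[g(P_1,\dots,P_k)] \geq g(e_1,\dots,e_k) = I(W_{\BSC}^{(k)})
\end{align*}
and prove it by showing that $g$ is convex in each argument under the change of variables $u_i = h(p_i) \in [0,1]$. Fixing $p_{-i} = (p_j)_{j\ne i}$, let $V_{-i}$ be the parallel combination of the remaining $\BSC(p_j)$'s and let $\mu_{V_{-i}}$ be its mixture-of-BSC representation. A direct computation using the $k=2$ identity $I(\BSC(a)\text{ parallel with }\BSC(b)) = 1 + h(a\star b) - h(a) - h(b)$, averaged against $\mu_{V_{-i}}$, yields
\begin{align*}
g(p_1,\dots,p_k) = I(V_{-i}) - h(p_i) + \mathbb{E}_{Q\sim\mu_{V_{-i}}}[h(p_i \star Q)],
\end{align*}
where $p\star q := p(1-q)+q(1-p)$. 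Viewed as a function of $u_i = h(p_i)$, the first summand is constant, the second is linear in $u_i$, and the third is a $\mu_{V_{-i}}$-average of functions $u_i \mapsto h(h^{-1}(u_i)\star Q)$; hence convexity of $g$ in $u_i$ reduces to the convexity, for each fixed $q\in[0,1/2]$, of the single-variable function
\begin{align*}
u \mapsto h(h^{-1}(u)\star q).
\end{align*}

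The hard part will be this last convexity claim, which is exactly Mrs.\ Gerber's lemma. I would establish it by direct calculus: compute $d^2/du^2$ via the chain rule using $h'(p) = \log((1-p)/p)$ and $(h^{-1})'(u) = 1/h'(h^{-1}(u))$, and verify non-negativity of the resulting expression for $u \in [0,1]$ and $q \in [0,1/2]$ (the boundary cases $q=0$ and $q=1/2$ reduce respectively to $u$ and the constant $1$, both trivially convex). Once Mrs.\ Gerber's lemma is in hand, I would apply Jensen's inequality sequentially: averaging over $P_1$ (with other coordinates held fixed) gives $\mathbb{E}_{P_1}[g(P_1,p_2,\dots,p_k)] \geq g(e_1,p_2,\dots,p_k)$ by convexity of $g$ in $h(p_1)$ together with the identity $h(e_1) = \mathbb{E}[h(P_1)]$; iterating for $i = 2,\dots,k$ by averaging in turn over each $P_i$ against $g(e_1,\dots,e_{i-1},\cdot,p_{i+1},\dots,p_k)$ yields $\mathbb{E}[g(P_1,\dots,P_k)] \geq g(e_1,\dots,e_k)$, which is the theorem.
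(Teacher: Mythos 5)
The paper does not prove this theorem---it is quoted from \cite{SSZ05, LHHH05} and used as a black box in the proof of Lemma~\ref{lem:combinedI}---so there is no in-paper argument to compare against. Your proof is a correct, self-contained derivation that follows the standard route to ``extremes of information combining.'' The key steps all check out: the BSC-mixture representation of a symmetric BMS channel (mixing index observable at the receiver) gives $I(W_i)=1-\mathbb{E}[h(P_i)]$, hence $h(\epsilon_i)=\mathbb{E}[h(P_i)]$, and $I(W^{(k)})=\mathbb{E}[g(P_1,\dots,P_k)]$ with the $P_i$ independent; the identity
\begin{align*}
g(p_1,\dots,p_k)= I(V_{-i}) - h(p_i) + \mathbb{E}_{Q\sim\mu_{V_{-i}}}\bigl[h(p_i\star Q)\bigr]
\end{align*}
is exact (check: $I(\BSC(a)\,\|\,\BSC(b))=1+h(a\star b)-h(a)-h(b)$, averaged over the BSC index $Q$ of $V_{-i}$, and $\mathbb{E}[1-h(Q)]=I(V_{-i})$); and the reduction of convexity in $u_i=h(p_i)$ to the single-variable Mrs.\ Gerber kernels $u\mapsto h(h^{-1}(u)\star q)$ is legitimate because the first term is $p_i$-independent, the second is affine in $u_i$, and convexity is preserved under the mixture over $Q$. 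The one-coordinate-at-a-time Jensen/marginalization is sound because the $P_i$ are independent and the convexity statement holds for arbitrary frozen values of the other coordinates, including $\epsilon_j$. The only place where the argument is not fully spelled out is the verification of Mrs.\ Gerber's lemma itself; the second-derivative computation is nontrivial and you should either carry it out in detail or simply cite Wyner and Ziv (1973), which is what the references \cite{SSZ05,LHHH05} do. With that reference supplied, the proof is complete.
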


\begin{remark}
Consider the transmission of a single bit $X$ using
$k$ independent symmetric B-DMCs $W_1,\dots, W_k$ with capacities
$I_1,\dots,I_k$. Theorem \ref{thm:extremes} states
that over the class of all symmetric channels with given mutual informations, 
the mutual information between the input and the output vector is minimized when each of the
individual channels is a BSC. 
\end{remark}

{\em Proof of Lemma \ref{lem:combinedI}: }
Let
$\epsilon \in [0,\frac12]$ be the crossover probability of a BSC with
capacity $I(W)$, i.e., $\epsilon = h^{-1}(1-I(W))$. 
Note that for $k\geq 2$, 
\begin{align*}
I(W^k) \geq I(W^2) \geq I(W).
\end{align*}
By Theorem \ref{thm:extremes}, we have $I(W^2) \geq
I(W_{\BSC(\epsilon)}^2)$. 
%\begin{align*}
%W_{\BSC{}}(Y_1,Y_2\mid X) = W_{\BSC{(\epsilon_0)}}(Y_1\mid X)W_{\BSC{(\epsilon_0)}}(Y_2\mid X)
%\end{align*}
A simple computation shows that 
\begin{align*}
I(W_{\BSC(\epsilon)}^2) = 1+h(2\epsilon\bar{\epsilon})- 2 h(\epsilon).
\end{align*}
We can then write
\begin{align}
I(W^k) - I(W) & \geq I(W_{\BSC(\epsilon)}^2) - I(W) \notag \\
& = I(W_{\BSC(\epsilon)}^2) - I(W_{\BSC(\epsilon)}) \notag  \\
& = h(2\epsilon \bar{\epsilon})-h(\epsilon).
%& > \eta(\delta) \label{eqn:binary-entropy}
\end{align}
Note that $I(W) \in (\delta, 1-\delta)$ implies $\epsilon \in (\phi (\delta), \frac 12 - \phi(\delta))$ where $\phi(\delta)>0$, which in turn implies $h(2\epsilon \bar{\epsilon})-h(\epsilon) > \eta(\delta)$ for some $\eta(\delta) > 0$.
\qed

\begin{lemma}
Consider a symmetric B-DMC $W$. Let $P_e(W)$ denote the bit error probability of uncoded transmission under MAP decoding. Then,
\begin{align*}
P_e(W) \geq \frac 12 ( 1 - \sqrt{1 - Z(W)^2}).
\end{align*}
\end{lemma}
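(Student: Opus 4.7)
The plan is to express both $P_e$ and $Z$ as sums over $y \in \mathcal{Y}$ of simple functionals of $W(y\mid 0)$ and $W(y\mid 1)$, and then relate them through a single application of the Cauchy--Schwarz inequality. For a symmetric B-DMC under MAP decoding with uniform priors, the standard formula is
\[
P_e(W) = \frac{1}{2}\sum_{y}\min\bigl(W(y\mid 0),\,W(y\mid 1)\bigr).
\]
Using $\sum_y W(y\mid 0) = \sum_y W(y\mid 1) = 1$, one gets $\sum_y \max(W(y\mid 0),W(y\mid 1)) = 2 - 2P_e$, and consequently
\[
\sum_y \bigl|W(y\mid 0) - W(y\mid 1)\bigr| \;=\; 2 - 4P_e \;=\; 2(1 - 2P_e).
\]

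The main step is to bound this $\ell_1$-distance by $2\sqrt{1-Z(W)^2}$. To this end I would factor
\[
\bigl|W(y\mid 0) - W(y\mid 1)\bigr| = \bigl|\sqrt{W(y\mid 0)} - \sqrt{W(y\mid 1)}\bigr|\cdot\bigl(\sqrt{W(y\mid 0)} + \sqrt{W(y\mid 1)}\bigr),
\]
and apply Cauchy--Schwarz to the sum over $y$:
\[
\Bigl(\sum_y |W(y\mid 0) - W(y\mid 1)|\Bigr)^2 \leq \Bigl(\sum_y (\sqrt{W(y\mid 0)} - \sqrt{W(y\mid 1)})^2\Bigr)\Bigl(\sum_y (\sqrt{W(y\mid 0)} + \sqrt{W(y\mid 1)})^2\Bigr).
\]
Expanding the two squared terms using $\sum_y W(y\mid x) = 1$ and the definition $Z(W) = \sum_y \sqrt{W(y\mid 0)W(y\mid 1)}$, the first factor equals $2 - 2Z(W)$ and the second equals $2 + 2Z(W)$, so their product is $4(1 - Z(W)^2)$.

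Combining the two displays gives $2(1 - 2P_e) \leq 2\sqrt{1 - Z(W)^2}$, which rearranges to the claimed inequality $P_e(W) \geq \tfrac{1}{2}(1 - \sqrt{1 - Z(W)^2})$. The only potential subtlety is the MAP formula for $P_e$; this is standard for symmetric channels with equiprobable inputs, but I would state it explicitly (noting ties may be broken arbitrarily since they do not affect the sum of minima). No new machinery beyond Cauchy--Schwarz is required, so the proof is short and self-contained.
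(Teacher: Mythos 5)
Your proof is correct, but it takes a genuinely different route from the paper. The paper's argument uses the standard fact that any symmetric B-DMC is equivalent to a convex combination of BSCs whose identity is revealed to the receiver: equality holds for each BSC, $P_e(W)=\sum_i\alpha_i\epsilon_i$ and $Z(W)=\sum_i\alpha_i Z_i$, and the bound then follows from Jensen's inequality applied to the convex function $x\mapsto 1-\sqrt{1-x^2}$. You instead work directly with the output alphabet: from $P_e(W)=\tfrac12\sum_y\min\bigl(W(y\mid 0),W(y\mid 1)\bigr)$ you identify $2(1-2P_e)$ with the $\ell_1$-distance $\sum_y\lvert W(y\mid 0)-W(y\mid 1)\rvert$, and a single Cauchy--Schwarz application (after factoring the difference of probabilities as a product of difference and sum of square roots) bounds its square by $(2-2Z)(2+2Z)=4(1-Z^2)$; the square root causes no sign issue since the $\ell_1$-distance is nonnegative. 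Both arguments are complete; yours has the advantage of being self-contained (no appeal to the BSC-decomposition of symmetric channels) and of not using symmetry at all, so it establishes the inequality for an arbitrary B-DMC with uniform inputs, while the paper's route makes transparent that the inequality is tight exactly in the BSC case and is shorter once the decomposition fact is taken as known.
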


\begin{proof}
One can check that the inequality is satisfied with equality for BSC. It is also known that any symmetric B-DMC $W$ is equivalent to a convex combination of several, say $K$, BSCs where the receiver has knowledge of the particular BSC being used. Let $\{\epsilon_i\}_{i=1}^K$ and $\{Z_i\}_{i=1}^K$ denote the bit error probabilities and the Bhattacharyya parameter of the constituent BSCs. Then, $P_e(W)$ and $Z(W)$ are given by
\begin{align*}
P_e(W) = \sum_{i=1}^K \alpha_i \epsilon_i, \qquad Z(W) = \sum_{i=1}^K \alpha_i Z_i
\end{align*}
for some $\alpha_i > 0$, with $\sum_{i=1}^K \alpha_i = 1$. Therefore,
\begin{align*}
P_e(W) & = \sum_{i=1}^K \alpha_i \frac12 (1 - \sqrt{1-Z_i^2})\\
& \geq \frac 12 (1 - \sqrt{1 - (\sum_{i=1}^K \alpha_i Z_i)^2})\\
& = \frac 12 (1 - \sqrt{1 - Z(W)^2}),
\end{align*}
where the inequality follows from the convexity of the function $x \to 1 - \sqrt{1-x^2}$ for $x\in (0,1)$.
\end{proof}

\bibliographystyle{IEEEtran} 
\bibliography{/home2/korada/bib/lth,/home2/korada/bib/lthpub}
%\bibliography{lth,lthpub}
\end{document}